\newif\iflong
\newtheorem{lemma}{Lemma}[section]
\newtheorem{theorem}[lemma]{Theorem}
\newtheorem{definition}[lemma]{Definition}
\newtheorem{corollary}[lemma]{Corollary}
\newtheorem{claim}[lemma]{Claim}
\newtheorem{observation}{Observation}
\begin{document}

\title{The Core of the Participatory Budgeting Problem}
\author{Brandon Fain\thanks{Department of Computer Science, Duke University, 308 Research Drive, Durham, NC 27708. {\tt btfain@cs.duke.edu}. Supported by NSF grants CCF-1637397 and IIS-1447554.}  \and Ashish Goel\thanks{Supported by the Army Research Office Grant No. 116388, the Office of Naval Research Grant No. 11904718, by NSF grant CCF-1637418, and by the Stanford Cyber Initiative. Author's Address: Management Science and Engineering Department, Stanford University, Stanford CA 94305. Email: {\tt ashishg@stanford.edu}} \and
Kamesh Munagala\thanks{Department of Computer Science, Duke University, Durham NC 27708-0129. {\tt kamesh@cs.duke.edu}. Supported by NSF  grants CCF-1408784, CCF-1637397, and IIS-1447554.}}

\date{}
\maketitle

\begin{abstract}
In \textit{participatory budgeting}, communities collectively decide on the allocation of public tax dollars for local public projects. In this work, we consider the question of fairly aggregating the preferences of community members to determine an allocation of funds to projects. This problem is different from standard fair resource allocation because of \textit{public goods}: The allocated goods benefit all users simultaneously. Fairness is crucial in participatory decision making, since generating equitable outcomes is an important goal of these processes. We argue that the classic game theoretic notion of core captures fairness in the setting.  To compute the core, we first develop a novel characterization of a public goods market equilibrium called the \textit{Lindahl equilibrium}, which is always a core solution.  We then provide the first (to our knowledge) polynomial time algorithm for computing such an equilibrium for a broad set of utility functions; our algorithm also generalizes (in a non-trivial way) the well-known concept of  proportional fairness. We use our theoretical insights to perform experiments on real participatory budgeting voting data. We empirically show that the core can be efficiently computed for utility functions that naturally model our practical setting, and examine the relation of the core with the familiar welfare objective. Finally, we address concerns of incentives and mechanism design by developing a randomized approximately dominant-strategy truthful mechanism building on the exponential mechanism from differential privacy.
\end{abstract}

\section{Introduction}

Transparency and citizen involvement are fundamental goals for a healthy democracy. Participatory Budgeting (PB)~\cite{pb1,pb2} is a process by which a municipal organization (eg. a city or a district) puts a small amount of its budget to direct vote by its residents. PB is growing in popularity, with over 30 such elections conducted in 2015.  Implementing participatory budgeting requires careful consideration of how to aggregate the preferences of community members into an actionable project funding plan. In this work, we model participatory budgeting as a fair resource allocation problem. We note that this problem is different from standard fair resource allocation because of \textit{public goods}: The allocated goods benefit all users simultaneously. We model this problem as a central body fairly allocating public goods according to preferences reported by the community members (or users), subject to a budget constraint. It is important to note that in participatory democracy, equitable and fair outcomes are an important systemic goal.

\paragraph{Model of Fairness:}
In a participatory budgeting setting, there are $k$ projects (or items) and $n$ voters (or agents) who participate. Unlike in a private good economy, it is usually the case that $k \ll n$. There is an overall budget $B$ available for funding projects. An \textit{allocation} is a $k$-dimensional vector $\mathbf{x} \in \mathbb{R}^k$ with $\mathbf{x} \geq 0$ and $\sum_{j=1}^k x_j \leq B$. The quantity $x_j$ denotes the funding for project $j$. We  assume voters report a cardinal \textit{utility function}. We denote the utility of an agent $i$ given an allocation $\mathbf{x}$  as $U_i(\mathbf{x})$, and we assume this function is continuous, non-decreasing, and concave. 

In this model, we study {\em fair allocations}. In classical economic theory, a {\em fair} allocation is one that is {\em Pareto-efficient} and {\em envy-free}~\cite{varian}.  An allocation $\mathbf{x}$ is Pareto-efficient (or Pareto) if there is no other feasible allocation $\mathbf{y}$ such that $U_i(\mathbf{x}) \le U_i(\mathbf{y})$ for all voters $i$, with a strict inequality for at least one voter. This captures the notion that the allocation is not doing a disservice by under-allocating to all agents.  In the context of private goods, envy-freeness means that no agent prefers the allocation of another agent.  However, a different notion is needed for public goods, since the allocation is shared among all agents.  In this paper, the concept of fairness with which we work is the core. This notion is borrowed from cooperative game theory and was first phrased in game theoretic terms in~\cite{scarfCore}. It has been studied extensively even in public goods settings~\cite{lindahlCore,coreConjectureCounter}. Below, we define the core and an approximate notion of the core.

\begin{definition} An allocation $\mathbf{x}$ is a \textbf{core} solution if there is no subset $S$ of agents who, given a budget of $(|S|/n)B$, could compute an allocation $\mathbf{y}$ where every user in $S$ receives strictly more utility in $\mathbf{y}$ than $\mathbf{x}$, i.e., $\forall i \in S, \; U_i(\mathbf{y}) > U_i(\mathbf{x})$.
\label{definition:core}
\end{definition}

\begin{definition}
For $\alpha > 1$, an allocation $\mathbf{x}$ lies in the \textbf{$\alpha$-approximate} multiplicative (resp. additive) core if for any subset $S$ of agents, there is no allocation $\mathbf{y}$ using a budget of of $(|S|/n)B$, s.t. $U_i(\mathbf{y}) > \alpha U_i(\mathbf{x})$ (resp. $U_i(\mathbf{y}) >  U_i(\mathbf{x}) + \alpha$) for all $i \in S$.  
\label{definition:approximateCore}
\end{definition}

Note that when $S = \{1,2,\ldots, n\}$, the above constraints encode a weak version of Pareto-Efficiency.  Further, when $S$ is a singleton voter, the core captures {\em Sharing Incentive}, meaning that the voter gets at least as much utility as she would have obtained with budget $B/n$ dedicated to just her. In general, the core captures a {\em group sharing incentive}: No community of users suffers envy with respect to its share of the overall budget.

\paragraph{Some Clarifying Examples:}  We briefly consider some examples to clarify the concept of the core and compare it with other definitions of fairness. For simplicity in these examples, assume the utility function of the agents is linear, so $U_i(\mathbf{x}) = \sum_{j=1}^k u_{ij} x_j$.  Also, assume that there is a unit size budget and all projects are of unit size. 

\begin{figure}[h!]
\label{app:coreeg}
\centering
\caption{Core Examples}
\subfloat[Tyranny of the Majority]{
\resizebox{0.25\linewidth}{!}{
    \begin{tabular}{|c ||c c|} 
        \hline
        Agent $i$ & $u_{i,1}$ & $u_{i,2}$ \\ [0.5ex] 
        \hline\hline
        1 & 1 & 0 \\ 
        \hline
        2 & 1 & 0 \\
        \hline
        \vdots & \vdots & \vdots \\  
        \hline
        $\lceil (n/2) \rceil +1$ & 1 & 0 \\
        \hline
        $\lceil (n/2) \rceil +2$ & 0 & 1 \\ 
        \hline
        \vdots & \vdots & \vdots \\  
        $n$ & 0 & 1 \\ [1ex]
        \hline
    \end{tabular}
}
}\hspace{0.5cm}
\subfloat[Account for Sharing]{
\resizebox{0.25\linewidth}{!}{
    \begin{tabular}{|c ||c c c|} 
        \hline
        Agent $i$ & $u_{i,1}$ & $u_{i,2}$ & $u_{i,3}$ \\ [0.5ex] 
        \hline\hline
        1 & 3/5 & 0 & 2/5 \\ 
        \hline
        2 & 3/5 & 0 & 2/5 \\
        \hline
        \vdots & \vdots & \vdots & \vdots \\  
        \hline
        $\lceil (n/2) \rceil$ & 3/5 & 0 & 2/5 \\
        \hline
        $\lceil (n/2) \rceil + 1$ & 0 & 3/5 & 2/5 \\ 
        \hline
        \vdots & \vdots & \vdots & \vdots \\  
        $n$ & 0 & 3/5 & 2/5 \\ [1ex]
        \hline
    \end{tabular}
}
}\hspace{0.5cm}
\subfloat[Fairness and Proportionality]{
\resizebox{0.25\linewidth}{!}{
    \begin{tabular}{|c ||c c|} 
        \hline
        Agent $i$ & $u_{i,1}$ & $u_{i,2}$ \\ [0.5ex] 
        \hline\hline
        1 & 1 & 0 \\ 
        \hline
        2 & 1 & 0 \\
        \hline
        \vdots & \vdots & \vdots \\  
        \hline
        $n-1$ & 1 & 0 \\
        \hline
        $n$ & 0 & 1 \\ [1ex]
        \hline
    \end{tabular}
}
}
\label{CoreExamples}
\end{figure}

First note that the core will produce a very different outcome from simply aggregating ``yes/no" votes of voters for different projects, and funding the projects in decreasing order of votes.  If we run such a procedure on the example in Figure~\ref{CoreExamples}(a), the majority has one more vote than the minority, yet the majority is exclusively privileged. Figure~\ref{CoreExamples}(b)  demonstrates that the naive fair allocation to allow every agent to determine $1/n$ of the overall allocation is not Pareto-efficient; such a scheme would fund items 1 and 2 with half of the budget each even though all agents would be better off if some of the budget were spent on item 3. In Figure~\ref{CoreExamples}(c), we contrast a core solution with max-min fairness. An allocation is max-min fair if the utility of the agent with the least utility is maximized. In our example, this corresponds to funding each of the two items with half of the budget. While this is Pareto-efficient, it favors one voter at the expense of all the others. The core solution for the same example is to fund \textit{proportionally}: Items are funded in proportion to the number of voters preferring them.

\paragraph{High-level goals:}
At a high level, we explore three related questions in Sections~\ref{sec:core},~\ref{sec:eval}, and~\ref{sec:truthful} respectively:

\begin{itemize}
\item Can we efficiently compute core allocations for reasonably general utility functions? 
\item What do these allocations look like for data generated by real participatory budgeting instances under utility functions motivated by that data?
\item For simple utility functions, can we develop a truthful mechanism for computing core allocations without payments?
\end{itemize}

We positively answer the first and third question using techniques from optimization and differential privacy to develop the algorithmic understanding of the Lindahl equilibrium, a market based notion we will define shortly.  For the second question, we use our theoretical results to develop principled heuristics that we validate using data from the Stanford Participatory Budgeting Project~\cite{SPBP}.  Before proceeding however, we turn to consider utility functions more precisely.

 \subsection{Utility Functions}
 \label{sec:utility}
We consider utility functions generalizing the linear utility functions used in previous examples. These utility functions, which we term {\sc Scalar Separable}, have the form 
$$U_i(\mathbf{x}) = \sum_{j} u_{ij} f_j(x_j)$$
for every agent $i$ where $\{f_j\}$ are smooth, non-decreasing, and concave, and $\mathbf{u_{i}} \geq 0$.  By $\sum_j$ we always mean the sum over the $k$ projects.  {\sc Scalar Separable} utilities are fairly general and well-motivated. First, this concept encompasses linear utilities and several other canonical utility functions (see below). Secondly, if voters express scalar valued preferences (such as up/down approval voting), {\sc Scalar Separable} utilities provide a natural way of converting these votes into cardinal utility functions.  In fact, as we discuss below, we  will do precisely this when handling real data.  We consider two subclasses that we term {\sc Non-satiating} and {\sc Saturating} utilities respectively. Each arises naturally in settings related to participatory budgeting.

\paragraph{Non-satiating Functions:}
 For our main computational result in Section~\ref{sec:core}, we consider a subclass of utility functions that we term {\sc Non-satiating}.  
 \begin{definition}
\label{def:non-sat}
A differentiable, strictly increasing, concave function $f$ is {\sc Non-Satiating} if $x f_j'(x)$ is monotonically increasing and equal to $0$ when $x=0$.
\end{definition}
 
This is effectively a condition that the functions grow at least as fast as  $\ln{x}$.  Several utility functions used for modeling substitutes and complements fall in this class. For instance, constant elasticity of substitution (CES) utility functions where
$$ U_i(\mathbf{x}) = \left( \sum_j u_{ij} x_j^{\rho} \right)^{\frac{1}{\rho}} \ \ \mbox{ for } \rho \in (0,1]$$ 
can be monotonically transformed into {\sc Non-satiating} utilities\footnote{Note that the core remains unchanged if utilities undergo a monotone transform.}. CES functions are also {\em homogeneous of degree 1}, meaning that $U_i(\alpha \mathbf{x}) = \alpha U_i(\mathbf{x})$ for any scalar $\alpha \ge 0$.  The special case when $\rho = 1$ captures linear utilities. The case when $\rho \rightarrow 0$ captures {\sc Cobb-Douglas} utilities which can be written as 
$$U_i(\mathbf{x}) = \prod_j x_j^{\alpha_{ij}}$$
 where $\sum_j \alpha_{ij} = 1$ and $\alpha_{ij} > 0$.  We consider homogeneous functions of degree 1 in Section~\ref{sec:truthful} to design a randomized approximately truthful mechanism. 
 

\paragraph{Saturating Functions:} Note that these utilities implicitly assume projects are divisible. Fractional allocations make sense in their own right in several scenarios: Budget allocations between goals such as defense and education at a state or national level are typically fractional, and so are allocations to improve utilities such as libraries, parks, gyms, roads, etc. However, in the settings for which we have real data, the projects are indivisible and have a monetary cost $s_j$, so that we have the additional constraint $x_j \in \{0,s_j\}$ on the allocations.  We describe such data from the Stanford Participatory Budgeting Platform~\cite{SPBP} in greater detail in Section~\ref{sec:eval}. We therefore need utility functions that model budgets in individual projects.  These utility functions must also be simple to account for the limited information elicited in practice.  For example, in the voting data that we use in our experiments, each voter receives an upper bound on how many projects she can select, and the ballot cast by a voter is simply the subset of projects she selects. A related voting scheme implemented in practice, called Knapsack Voting~\cite{goel2015knapsack}, has similar elicitation properties. For modeling these two considerations, we consider {\sc Saturating} utilities.

\begin{definition}
\label{eq:sat}
A utility function is in the {\sc Saturating} model if it has the form
\[
U_i(\mathbf{x}) = \sum_j u_{ij} \min{\left(x_j/s_j, 1 \right)}
\]
\end{definition}

For converting our voting data into a {\sc Saturating} utility, we set $s_j$ to be the budget of project $j$, and set $u_{ij}$ to $1$ if agent $i$ votes for project $j$ and $0$ otherwise. Note that if $x_j = s_j$, then the utility of any agent who voted for this item is $1$. This implies the total utility of an agent is the number (or total fraction) of items that he voted for that are present in the final allocation. Clearly, {\sc Saturating} utilities do not satisfy Def.~\ref{def:non-sat}.  However, we will connect {\sc Non-satiating} and {\sc Saturating} utilities  by developing an approximation algorithm and heuristic for computing core allocations in the saturating model using results developed for the {\sc Non-satiating} model.  

\subsection{Computing Core Solutions via the Lindahl Equilibrium}
In a fairly general public goods setting, there is a market based notion of fairness due to Lindahl~\cite{LindahlPaper} and Samuelson~\cite{SamuelsonPaper} termed the {\em Lindahl equilibrium}, which is based on setting different prices for the public goods for different agents.  The market on which the Lindahl equilibrium is defined is a mixed market of public and private goods.  We present a definition below that is specialized to just a public goods market relevant for participatory budgeting. 
 
\begin{definition}
\label{def:lindahl}
In a public goods market with budget $B$, per-voter prices $\mathbf{p_1}, \mathbf{p_2}, ..., \mathbf{p_n}$ each in $\mathbb{R^+}^k$ and allocation $\mathbf{x} \in \mathbb{R^+}^k$ constitute a \textbf{Lindahl equilibrium} if the following two conditions hold:
\begin{enumerate}
 \item For every agent $i$, the utility $U_i(\mathbf{y}_i)$ is maximized subject to $\mathbf{p_i} \cdot \mathbf{y}_i \leq B/n$ when $\mathbf{y}_i = \mathbf{x}$; and
\item The {\em profit} defined as $\left(\sum_i \mathbf{p_i} \right) \cdot \mathbf{z} - \|\mathbf{z}\|_1$, subject to $\mathbf{z} \ge 0$ is maximized when $\mathbf{z} = \mathbf{x}$.
\end{enumerate}
\end{definition}

The price vector for every agent is traditionally interpreted as a tax.  However, unlike in private goods markets, in our case these prices (or taxes) are purely hypothetical; we are only interested in the allocation that results at equilibrium (in fact, we eliminate the prices from our characterization of the equilibrium). Under innocuous conditions for the mixed public and private goods market, Foley proved that the Lindahl equilibrium exists and lies in the core~\cite{lindahlCore}.  This remains true in our specialized instance of the problem; the omitted proof is a trivial adaption from~\cite{lindahlCore}.  Thus, computing a Lindahl equilibrium is sufficient for the purpose of computing a core allocation.  However, Foley only proves existence of the equilibrium via a fixed point argument that does not lend itself to efficient computation.

 \subsection{Truthfulness and Mechanisms} 
In addition to investigating the computational complexity of the core, we also investigate dominant strategy truthfulness. We study {\em asymptotic approximate truthfulness}~\cite{LiuPycia2011}. For any agent $i$, reporting the true utility function $U_i(\mathbf{x})$ maximizes the expected utility of agent $i$, subject to all agents $i' \neq i$ reporting true utility functions $U_{i'}(\mathbf{x})$ and agent $i$ knowing these. Our notion is asymptotic in the sense that we assume $n \gg k$, which is reasonable in practice.  It will also be approximately truthful in the following sense.

\begin{definition}
For $\delta > 0$, a (randomized) allocation mechanism is $\delta-$ \textbf{approximately dominant-strategy truthful} if any agent's expected utility increases by at most an additive value of $\delta$ by misreporting, even when she knows the utilities of the other agents.
\label{definition:approximateTruthful}
\end{definition}

Though it is easy to deduce from previous work that the core always exists for participatory budgeting, it is also reasonably well-known that a core outcome is easy to manipulate. This is not especially surprising; truthfulness has long been considered a serious problem in economics for the allocation of public goods \cite{freeRiderProblem,coreConjectureCounter}. In particular, the core outcome suffers from the classic economic free rider problem: Many agents benefit from one common good and knowing that, another agent who also benefits from that item falsely reports that she gets no utility from that common item and instead gets high utility only from her own uniquely preferred items.
 
Consider the example in Figure~\ref{TruthfulnessExamples}(a) below showing utilities for $n$ agents and $2$ items. Again, for simplicity assume utilities are linear, so that $U_i(\mathbf{x}) = \sum_j u_{ij} x_j$, and assume the budget and all projects are of unit size.  Suppose agent 1 knows that every other agent wants the first item, which agent 1 also wants but likes less than the second item. Then there is incentive for agent 1 to lie and report her utility for item 1 as 0 and her utility for item 2 as 1, in which case she still benefits from a large allocation of the shared item, but benefits slightly more from the addition of more of her uniquely preferred item in the core solution. However, one may quickly point out that the additional amount of utility agent $1$ receives by lying goes to $0$ for a large value of $n$. This begs the question of whether the core outcome is truthful in the large market limit. 

\begin{figure}[h!]
\centering
\caption{Core Outcomes are Easy to Manipulate}
\subfloat[Free Rider]{
    \begin{tabular}{|c ||c c|} 
        \hline
        Agent $i$ & $u_{i,1}$ & $u_{i,2}$ \\ [0.5ex] 
        \hline\hline
        1 & 1/3 & 2/3 \\ 
        \hline
        2 & 1 & 0 \\
        \hline
        \vdots & \vdots & \vdots \\  
        \hline
        $n$ & 1 & 0 \\ [1ex]
        \hline
    \end{tabular}
}\hspace{0.5cm}
\subfloat[Large Market]{
    \begin{tabular}{|c ||c c|} 
        \hline
        Agent $i$ & $u_{i,1}$ & $u_{i,2}$ \\ [0.5ex] 
        \hline\hline
        1 & 1/3 & 2/3 \\ 
        \hline
        2 & 2/3 & 1/3 \\
        \hline
        3 & 1/2 & 1/2 \\
        \hline
        \vdots & \vdots & \vdots \\  
        \hline
        $n$ & 1/2 & 1/2 \\ [1ex]
        \hline
    \end{tabular}
}
\label{TruthfulnessExamples}
\end{figure}

Unfortunately, it is also well-known that this does not happen for public goods. In Figure~\ref{TruthfulnessExamples}(b), many agents are indifferent between two items and two agents each prefer one of the two. Because all but the first two agents are indifferent, the core solution will be entirely based on the reported benefits from the first two agents regardless of the total number of agents. Thus, the incentive to misreport does not necessarily vanish as $n$ becomes large.  

In these examples, the misreporting agent needs to very precisely know the preferences of the other agents. Indeed, this is necessary:  Computing the core of public goods with no private goods (\textit{e.g.}, money transfers) satisfies the property of \textit{strategy-proof in the large}~\cite{strategyProofInLarge},  meaning that if agents know a distribution from which the preferences of other agents are drawn, then the market is truthful in expectation in the limit as the number of agents grows large.  However, in this paper, we study the stronger notion of dominant strategy truthfulness.

\subsection{Our Results}
In Section~\ref{sec:core}, we present a simple characterization of the Lindahl equilibrium in terms of the allocation variables and a means of efficient computation for {\sc Non-satiating} utilities.  Together, this results in an efficient algorithm for computing the core exactly for {\sc Non-satiating} utilities via convex programming. As far as we are aware, this is the {\em first} non-trivial computational result for the Lindahl equilibrium.  

As a consequence of our characterization, if the utility functions are homogeneous of degree 1 and concave (or any monotone transform thereof), then the {\em proportionally fair} allocation, the extentsion of the Nash Bargaining solution~\cite{nashBargaining}) that maximizes $\sum_i \log U_i(\mathbf{x})$, computes the Lindahl equilibrium. This mirrors similar results for computing a Fisher equilibrium in private good markets~\cite{eisenbergGaleMarkets}. In addition, we show that for homogeneous functions, quadratic voting~\cite{qv} can be used to elicit the gradient of the proportional fairness objective, pointing to practical implementations in the field. For more general utility functions, our potential function can be viewed as a regularized version of the proportional fairness objective written on a non-linear transform of the utility function -- a result that is new to the best of our knowledge. We also note that the class of {\sc Non-satiating} utilities includes many functions that are not monotone transforms of homogeneous functions of degree 1, and for some of these functions, computing a Fisher equilibrium is intractable~\cite{marketHardness}.

In Section~\ref{sec:eval}, we consider the question of computing core solutions for real world data sets from the Stanford Participatory Budgeting Platform~\cite{SPBP} that we model using {\sc Saturating} utility functions as discussed in Section~\ref{sec:utility}. We present an approximation algorithm as well as a heuristic implementation inspired by our characterization.  On real data, we find that this heuristic efficiently compute the exact core. Surprisingly, the resulting outcomes match the welfare optimal solutions on the same utility functions, which shows that simple approval voting schemes produce fair outcomes in the field.

In Section~\ref{sec:truthful}, we address incentive concerns. Truthfulness has long been considered a serious problem for the allocation of public goods \cite{freeRiderProblem,coreConjectureCounter}. We study {\em asymptotic approximate truthfulness}~\cite{LiuPycia2011}. For any agent $i$, reporting the true utility function $U_i(\mathbf{x})$ maximizes the expected utility of agent $i$, subject to all agents $i' \neq i$ reporting true utility functions and agent $i$ knowing these. Our notion is asymptotic in the sense that  $n \gg k$, which is reasonable in practice.  
 We show that when agents' utilities are linear (and more generally, homogeneous of degree 1), there is an efficient randomized mechanism that implements an $\epsilon$-approximate core solution as a dominant strategy for large $n$. We use the Exponential Mechanism~\cite{differentialPrivacy} from differential privacy to achieve this. The application of the Exponential Mechanism is not straightforward since the proportional fairness objective (that computes the Lindahl equilibrium) is not separable when used as a scoring function; the allocation variables are common to all agents. Furthermore, this objective varies widely when one agent misreports utility.   We define a scoring function directly based on the {\em gradient condition} of proportional fairness to circumvent this hurdle.


\subsection{Related Work}

The general literature characterizing private good market equilibrium and computation is extensive \cite{marketEquilibria,vaziraniBargaining,marketHardness,eisenbergGaleMarkets}; however, there is relatively little literature giving computational results for public good economies.  As discussed above, the proportional fairness algorithm, which has been extensively studied in private good markets~\cite{nashBargaining,eisenbergGaleMarkets}, need not find solutions in the core for {\sc Scalar Separable} utilities, and we can view our computational results as providing a non-trivial generalization of the proportional fairness concept to Lindahl equilibria. 

In previous work~\cite{ROBUS}, a subset of authors studied randomized allocation of shared resources in a database system, and used a first principles proof to show that proportional fairness finds a core solution; this corresponds to the linear utility setting. As mentioned above, out computational results for {\sc Scalar Separable} utilities  are far more general and are based on characterizing the Lindahl equilibrium.
In participatory budgeting, neither linear utility functions nor randomized allocations are realistic assumptions. 

Our work is related to designing truthful mechanisms for combinatorial public projects~\cite{CPPP}. However, these works focused on the social welfare objective and utilized payments as does the well known VCG mechanism~\cite{vickrey,clarke}, which is impractical for the application of participatory budgeting.  It is well-known that proportional fairness is not implementable in a truthful mechanism for private goods. A strong negative result was shown recently: No strictly truthful mechanism can achieve more than a $1/2$-approximation to the proportionally fair solution~\cite{coleProportionalFairness}. Though these as well as public good markets are truthful in the Bayesian sense in the large market limit~\cite{strategyProofInLarge,budish2,budish3}, we seek dominant strategy truthful mechanisms, which are non-trivial to design for public good markets even in the large market limit. Finally, truthful allocation of private goods without money is classically referred to as {\em cake cutting}, on which there is extensive literature~\cite{cakeCutting}; but, there is scant work on allocating public goods truthfully without money.  The problem of truthful allocation of public goods without payments is considered in the context of the facility location problem in~\cite{DesignWithoutPayments}; however, the setting is unrelated to ours and the authors are concerned with the social welfare or the total dis-utility, not the core.

\subsection{Roadmap}
In Section~\ref{sec:core}, we present a characterization of the Lindahl equilibrium and means of efficient computation for {\sc Non-satiating} utilities. Using this, we present an approximation algorithm for {\sc Saturating} utilities in Section~\ref{sec:eval}, along with a heuristic implementation and evaluation on real-world data sets. We present the asymptotically truthful mechanism for linear utilities in Section~\ref{sec:truthful}.

\section{Non-Satiating Utilities: Characterization and Computation}
\label{sec:core}
Recall that in the participatory budgeting problem, there are $k$ items (or projects) and $n$ agents (or voters). It is typically the case that $k \ll n$. We will denote a generic voter by $i$ and a generic item by $j$. There is an overall budget of $B$. An \textit{allocation} is a $k$-dimensional vector $\mathbf{x} \in \mathbb{R}^k$ with $\mathbf{x} \geq 0$ and $\sum_{j=1}^k x_j \leq B$. We consider scalar separable utility, where the utility of an agent $i$ given an allocation $\mathbf{x}$ is denoted as $U_i(\mathbf{x}) = \sum_j u_{ij} f_j(x_j)$, where $\{f_j\}$ are smooth, non-decreasing, and concave, and $\mathbf{u_{i}} \geq 0$. 
\subsection{Characterization}


Recall that in order to compute a core allocation it is sufficient to compute a Lindahl equilibrium (Definition~\ref{def:lindahl}). To do this, our first result is to develop a characterization of the Lindahl equilibrium that eliminates the price variables.

\begin{theorem}
An allocation $\mathbf{x} \geq 0$ corresponds to a Lindahl equilibrium if and only if 
\begin{equation}
\label{eq4}
\sum_i \left( \frac{u_{ij}f_{j}'(x_j)}{\sum_m u_{im} x_m f_{m}'(x_m) } \right) \leq \frac{n}{B}
\end{equation}
 for all items $j$, where this inequality is tight when $x_j > 0$.  
\label{theorem:eq}
\end{theorem}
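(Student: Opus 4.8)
The plan is to derive the characterization by writing down the KKT / first-order optimality conditions for each of the two maximization problems in Definition~\ref{def:lindahl}, and then eliminate the price variables $\mathbf{p_i}$ by substitution.

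The plan is to write down the KKT (first-order) conditions for each of the two optimization problems in Definition~\ref{def:lindahl} and then algebraically eliminate the price vectors $\mathbf{p}_i$. Throughout I will use that $U_i(\mathbf{y}) = \sum_j u_{ij}f_j(y_j)$ is concave and the constraints are linear with $B/n > 0$, so Slater's condition holds and the KKT conditions are both necessary and sufficient for each subproblem.

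First, for the agent subproblem ($\max U_i(\mathbf{y}_i)$ subject to $\mathbf{p}_i\cdot\mathbf{y}_i \le B/n$ and $\mathbf{y}_i \ge 0$), stationarity at the optimum $\mathbf{y}_i = \mathbf{x}$ with a multiplier $\lambda_i \ge 0$ on the budget and $\mu_{ij}\ge 0$ on $y_{ij}\ge 0$ gives $u_{ij}f_j'(x_j) - \lambda_i p_{ij} + \mu_{ij} = 0$ with $\mu_{ij}x_j = 0$; hence $u_{ij}f_j'(x_j) = \lambda_i p_{ij}$ when $x_j > 0$ and $u_{ij}f_j'(0) \le \lambda_i p_{ij}$ when $x_j = 0$. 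Assuming $\lambda_i > 0$ (which, as I discuss below, is exactly the condition under which~\eqref{eq4} is meaningful), the budget binds, so $\sum_j p_{ij}x_j = B/n$. Multiplying the equalities $u_{ij}f_j'(x_j) = \lambda_i p_{ij}$ by $x_j$ and summing over $j$ (terms with $x_j = 0$ contribute nothing) pins down $\lambda_i = \tfrac{n}{B}\sum_m u_{im}x_m f_m'(x_m)$, and therefore $p_{ij} = \tfrac{B}{n}\cdot\tfrac{u_{ij}f_j'(x_j)}{\sum_m u_{im}x_m f_m'(x_m)}$ whenever $x_j > 0$.

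Second, for the profit subproblem the objective $\bigl(\sum_i \mathbf{p}_i\bigr)\cdot\mathbf{z} - \|\mathbf{z}\|_1$ is linear in $\mathbf{z}\ge 0$ with coefficient $\sum_i p_{ij} - 1$ on $z_j$, so it attains a finite maximum at $\mathbf{z} = \mathbf{x}$ iff $\sum_i p_{ij}\le 1$ for all $j$, with equality whenever $x_j > 0$. Substituting the price formula from the previous step into these relations yields~\eqref{eq4} directly: for $x_j > 0$ we get the tight equality $\sum_i \tfrac{u_{ij}f_j'(x_j)}{\sum_m u_{im}x_m f_m'(x_m)} = \tfrac{n}{B}$, while for $x_j = 0$ the bound $p_{ij}\ge u_{ij}f_j'(0)/\lambda_i$ combined with $\sum_i p_{ij}\le 1$ gives the inequality. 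For the converse, given $\mathbf{x}\ge 0$ satisfying~\eqref{eq4}, I define $\lambda_i$ and $p_{ij}$ (for all $j$) by the same formulas; then $\lambda_i p_{ij} = u_{ij}f_j'(x_j)$, so the agent's KKT conditions hold with $\mu_{ij} = 0$ and a binding budget ($\mathbf{p}_i\cdot\mathbf{x} = B/n$ by construction), whence $\mathbf{x}$ is agent-optimal by concavity; and $\sum_i p_{ij}\le 1$ with equality on the support of $\mathbf{x}$ is precisely~\eqref{eq4}, so the profit condition holds. This exhibits prices certifying that $\mathbf{x}$ is a Lindahl equilibrium.

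The step I expect to be the main obstacle is handling well-definedness and nondegeneracy: if $\sum_m u_{im}x_m f_m'(x_m) = 0$ for some agent $i$, then her budget constraint is slack at any prices, $\lambda_i = 0$, the price-elimination argument breaks, and~\eqref{eq4} is not even well-posed. I will argue this cannot happen for the utility classes of interest — in particular whenever the $f_j$ are strictly increasing and each agent places positive value on some funded item, which covers the {\sc Non-satiating} case — and otherwise such agents impose no constraint and can be set aside. The remaining steps (verifying complementary slackness and that concavity plus Slater make KKT sufficient) are routine.
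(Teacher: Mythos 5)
Your proposal is correct and follows essentially the same route as the paper's own proof: write the KKT conditions for the agent problem and the profit problem in Definition~\ref{def:lindahl}, identify $\lambda_i = \frac{n}{B}\sum_m u_{im}x_m f_m'(x_m)$, and eliminate the prices via $\sum_i p_{ij}\le 1$ with tightness on the support. The only differences are that you spell out the converse direction by explicitly reconstructing the prices and flag the degenerate case $\sum_m u_{im}x_m f_m'(x_m)=0$, both of which the paper leaves implicit.
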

\begin{proof}
We prove the statement for more general utility functions, $\{U_i(\mathbf{x})\}$. Recall the definition of Lindahl equilibrium from Definition~\ref{def:lindahl}: Condition (1) implies that there is a dual variable $\lambda_i$ for every agent such that 
\begin{equation}
\label{eq1}
\forall j \qquad \frac{\partial}{\partial x_j}U_i(\mathbf{x}) \le \lambda_i p_{ij}
\end{equation}
with the inequality being tight if $x_j > 0$. Condition (1) in Definition~\ref{def:lindahl} also implies that $\sum_j p_{ij} x_j = B/n$ for all $i$. Multiplying Equation (\ref{eq1}) by $x_j$, noting that the inequality is tight when $x_j > 0$, and summing,  
$$\forall i, \;\;\; \sum_j x_j \frac{\partial}{\partial x_j}U_i(\mathbf{x}) = \sum_j \lambda_i p_{ij} x_j = \lambda_i(B/n)$$
Rearranging
\begin{equation}
\label{eq2}
\forall i, \;\;\; \lambda_i = \frac{n}{B} \sum_j x_j \frac{\partial}{\partial x_j}U_i(\mathbf{x})
\end{equation}

Similarly, Condition (2) in Definition~\ref{def:lindahl} implies that $\forall x_j, \sum_i p_{ij} \leq 1$ where the inequality is tight when $x_j > 0$. Substituting into Inequality (\ref{eq1}) and summing, we have:
\begin{equation}
\forall x_j,\;\;\; \sum_i \frac{\frac{\partial}{\partial x_j}U_i(\mathbf{x})}{\lambda_i} \leq 1
\label{eq:lambda}
\end{equation}
with the inequality being tight when $x_j > 0$. Using Equation (\ref{eq2}) to eliminate $\lambda_i$ in Inequality (\ref{eq:lambda}), we finally obtain
$$\forall x_j,\;\;\; \sum_i \frac{\frac{\partial}{\partial x_j}U_i(\mathbf{x})}{\sum_m x_m \frac{\partial}{\partial x_m}U_i(\mathbf{x})} \leq \frac{n}{B}$$
with the inequality being tight when $x_j > 0$. Taking the appropriate partial derivatives in the {\sc scalar separable} utility model yields the theorem statement. 
\end{proof}




\subsection{Efficient Computation}
We now present our main computational result that builds on the characterization above to give (to the best of our knowledge) the first polynomial time method for computing the Lindahl equilibrium.  We need the non-satiation assumption on the functions $\{f_j\}$ given in Def.~\ref{def:non-sat}.

\begin{theorem}
\label{the:potential}
When $U_i(\mathbf{x}) = \sum_j u_{ij} f_j(x_j)$ where $\{f_j\}$ satisfy Def.~\ref{def:non-sat}, the Lindahl equilibrium (and therefore a core solution) can be computed as the solution to a convex program.
\label{theorem:potential}
\end{theorem}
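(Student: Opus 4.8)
The plan is to reparametrize the allocation and recognize the optimality conditions~(\ref{eq4}) as the KKT conditions of a regularized proportional-fairness program. For each item $j$ set $q_j(t) := t f_j'(t)$; Non-satiation (Def.~\ref{def:non-sat}) says precisely that $q_j$ is a continuous nondecreasing bijection of $[0,\infty)$ onto $[0,q_j(\infty))$ with $q_j(0)=0$, so it has an inverse $r_j := q_j^{-1}$ with $r_j(0)=0$. Introduce new variables $z_j := q_j(x_j) \ge 0$ and consider the program
\begin{equation*}
\max_{\mathbf z \ge 0}\ \Phi(\mathbf z), \qquad \Phi(\mathbf z) := \frac{B}{n}\sum_i \log\Big(\sum_j u_{ij} z_j\Big) \;-\; \sum_j c_j(z_j), \qquad c_j(z) := \int_0^{z}\frac{r_j(t)}{t}\,dt,
\end{equation*}
and output the allocation $x_j := r_j(z_j)$. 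Here $\sum_j u_{ij} z_j = \sum_j u_{ij} x_j f_j'(x_j)$ is the non-linear transform of agent $i$'s utility that appears in the denominator of~(\ref{eq4}), and $-\sum_j c_j(z_j)$ is the regularizer; when every $f_j$ is linear, $q_j$ and $c_j$ are linear and the program collapses to the familiar Eisenberg--Gale/proportional-fairness program (with the budget constraint recovered from the linear penalty).

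First I would check that this is a genuine convex program with an optimum. Each term $\log\big(\sum_j u_{ij} z_j\big)$ is concave, and each $c_j$ is convex because $c_j'(z) = r_j(z)/z = 1/f_j'\!\big(r_j(z)\big)$ is nondecreasing in $z$ (as $r_j$ is nondecreasing and $f_j'$ is nonincreasing by concavity of $f_j$), so $\Phi$ is concave on the convex set $\{\mathbf z \ge 0\}$. A coercivity argument --- the penalties $c_j$ grow at least linearly, or the domain is bounded when $q_j(\infty)<\infty$ --- shows the supremum is attained, and Slater's condition holds at $\mathbf z=(\epsilon,\dots,\epsilon)$, so the maximizer is characterized by its KKT conditions.

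Next I would match KKT with~(\ref{eq4}). Stationarity of $\Phi$ says that for every $j$,
\begin{equation*}
\frac{B}{n}\sum_i \frac{u_{ij}}{\sum_m u_{im} z_m}\ \le\ c_j'(z_j), \qquad \text{with equality whenever } z_j>0.
\end{equation*}
Substituting back $z_m = q_m(x_m) = x_m f_m'(x_m)$, so that $\sum_m u_{im} z_m = \sum_m u_{im} x_m f_m'(x_m)$ and $c_j'(z_j) = 1/f_j'(x_j)$, and multiplying through by $f_j'(x_j)$, turns this into exactly Inequality~(\ref{eq4}), tight precisely when $x_j>0$ (since $q_j(0)=0$ gives $z_j>0 \iff x_j>0$). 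By Theorem~\ref{theorem:eq} the output $\mathbf x$ is a Lindahl equilibrium allocation; and multiplying~(\ref{eq4}) by $x_j$, summing over $j$, using tightness, and noting $\sum_j u_{ij} x_j f_j'(x_j) = \sum_m u_{im} z_m$ gives $\sum_j x_j = B$, so $\mathbf x$ is feasible. Since a Lindahl equilibrium lies in the core (Foley~\cite{lindahlCore}), this proves the theorem; and as $\Phi$ is concave and $r_j$ is computable from $f_j$, the program is solvable to any accuracy in polynomial time under the usual mild regularity assumptions on the $f_j$.

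The conceptual crux --- and the step I expect to be hardest to pin down --- is discovering the right pairing of the transform $z_j = x_j f_j'(x_j)$ with the regularizer $c_j$ defined by $c_j'(z) = 1/f_j'(r_j(z))$; once these are chosen, the concavity check (the only place Non-satiation is essential, to make the transform a well-behaved bijection) and the KKT-to-(\ref{eq4}) manipulation are mechanical. A lesser obstacle is the attainment/coercivity argument, since $f_j'(0)$ or $f_j'(\infty)$ can be degenerate. It is worth stressing that the change of variables is not cosmetic: $\log\big(\sum_j u_{ij} x_j f_j'(x_j)\big)$ need not be concave in $\mathbf x$ for a general non-satiating $f_j$, so one genuinely needs to pass to the $\mathbf z$ coordinates to obtain a convex program.
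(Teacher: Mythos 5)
Your proposal is correct and follows essentially the same route as the paper's proof: the same change of variables $z_j = x_j f_j'(x_j)$, the same regularized log-utility potential (yours is the paper's $\Phi$ scaled by $B/n$, with your $c_j$ playing the role of the paper's $R_j$), the same concavity argument via monotonicity of $1/f_j'$, and the same identification of the KKT conditions with the characterization in Theorem~\ref{theorem:eq}. The extra details you supply (attainment, Slater, and the budget-feasibility check) are fine additions but do not change the argument.
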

\begin{proof}
Theorem~\ref{theorem:eq} gives the characterization of the Lindahl equilibrium as
$$ \sum_i \left(\frac{ u_{ij}f_{j}'(x_j)}{\sum_m u_{im} x_m f_{m}'(x_m)} \right) \le \frac{n}{B} $$
for all items $j$, with the inequality tight if $x_j > 0$. 
\medskip

Define $z_j = x_j f_j'(x_j)$. Note that $x_j = 0$ iff $z_j = 0$. Since $f_j$ satisfies non-satiation, this function is continuous and monotonically increasing, and hence invertible.  Let $h_j$ be this inverse such that $h_j(z_j) = x_j$. Let $r_j(z_j) = h_j(z_j)/z_j = 1/f_j'(x_j)$. The Lindahl equilibrium characterization therefore simplifies to:
$$ \sum_i \left(\frac{ u_{ij}}{\sum_m u_{im} z_m} \right) \le \frac{n}{B} r_j(z_j) $$
with the inequality being tight when $z_j > 0$.  Let $R_j(z_j)$ be the indefinite integral of $r_j$ (with respect to $z_j$). Define the following potential function 

\begin{equation}
\label{eq:phi}
\Phi(\mathbf{z}) = \sum_i \log{\left( \sum_j u_{ij} z_j\right)} - \left( \frac{n}{B} \right) \sum_j R_j(z_j) 
\end{equation}

We claim that $\Phi(\mathbf{z})$ is concave in $\mathbf{z}$. The first term in the summation is trivially concave. Also, since $f_j'(x_j)$ is a decreasing function, $1/f_j'(x_j)$ is increasing in $x_j$. Since $r_j(z_j) = 1/f_j'(x_j)$, this is increasing in $x_j$ and hence in $z_j$. This implies $R_j(z_j)$ is convex, showing the second term in the summation is concave as well. It is easy to check that the optimality conditions of maximizing $\Phi(\mathbf{z})$ subject to $\mathbf{z} \ge 0$ are exactly the conditions for the Lindahl equilibrium. This shows that the Lindahl equilibrium corresponds to the solution to the convex program maximizing $\Phi(\mathbf{z})$. 
\end{proof}

Using Theorem~\ref{cor:approx} in Appendix~\ref{sec:approx}, an approximately optimal solution to the convex program gives an approximate core solution, which implies polynomial time computation for the convex program. We note that the non-satiation condition essentially implies that $f_j(x_j)$ should grow faster than $\ln{x_j}$.  In combination with the assumption that $f_j(x_j)$ is concave (i.e., that it grows no faster than linear in $x_j$), this leaves us with a broad class of concave functions for which the Lindahl equilibrium and hence the core can be efficiently computed.  



\subsection{Connection to Proportional Fairness}
\label{sec:prop1}
The following is now a simple corollary of Theorem~\ref{the:potential}.

\begin{corollary}
\label{cor:prop}
If $U_i(\mathbf{x})$ is linear, {\em i.e.}, $U_i(\mathbf{x}) = \sum_j u_{ij} x_j$, or more generally, if it is homogeneous of degree 1, then the Lindahl equilibrium coincides with the proportionally fair allocation that maximizes $\sum_i \log U_i(\mathbf{x})$ subject to $\|\mathbf{x}\|_1 \le B$ and $\mathbf{x} \ge 0$.
\end{corollary}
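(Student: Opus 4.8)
The plan is to match the first-order (KKT) conditions of the proportional fairness convex program against the price-free characterization of the Lindahl equilibrium, using Euler's identity for homogeneous functions to collapse the denominator. Recall that the proof of Theorem~\ref{theorem:eq} actually establishes, for \emph{any} concave non-decreasing $\{U_i\}$, the characterization
\begin{equation*}
\sum_i \frac{\frac{\partial}{\partial x_j} U_i(\mathbf{x})}{\sum_m x_m \frac{\partial}{\partial x_m} U_i(\mathbf{x})} \le \frac{n}{B}, \qquad \text{tight whenever } x_j > 0.
\end{equation*}
When each $U_i$ is (differentiable and) homogeneous of degree $1$, Euler's identity gives $\sum_m x_m \frac{\partial}{\partial x_m} U_i(\mathbf{x}) = U_i(\mathbf{x})$, so the Lindahl conditions simplify to $\sum_i \frac{\partial_j U_i(\mathbf{x})}{U_i(\mathbf{x})} \le n/B$, with equality when $x_j > 0$.

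Next I would write the KKT conditions of $\max \sum_i \log U_i(\mathbf{x})$ subject to $\sum_j x_j \le B$ and $\mathbf{x} \ge 0$. Since each $U_i$ is concave and non-negative, $\log U_i$ is concave, so this is a convex program with linear constraints and KKT is necessary and sufficient for optimality. Letting $\mu \ge 0$ be the multiplier on the budget constraint, stationarity reads $\sum_i \frac{\partial_j U_i(\mathbf{x})}{U_i(\mathbf{x})} \le \mu$ for every $j$, tight when $x_j > 0$. Because the utilities are increasing, the budget binds, so $\sum_j x_j = B$; multiplying the tight stationarity conditions by $x_j$, summing over $j$, and invoking Euler once more ($\sum_j x_j \partial_j U_i = U_i$) yields $n = \mu \sum_j x_j = \mu B$, i.e.\ $\mu = n/B$. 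Hence the KKT system of the proportional fairness program is \emph{verbatim} the simplified Lindahl characterization, and by convexity the set of proportionally fair allocations equals the set of Lindahl equilibria.

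The only delicate points are (i) justifying differentiability and the use of Euler's identity — which holds for the homogeneous-degree-$1$ utilities under consideration, including CES with $\rho \in (0,1]$ — and (ii) handling agents with $U_i(\mathbf{x}) = 0$, where $\log U_i = -\infty$; one argues that an optimal (equivalently, equilibrium) allocation may be taken with $U_i(\mathbf{x}) > 0$ for every agent with $\mathbf{u}_i \neq \mathbf{0}$ and with no budget wasted on items no agent values, after which the argument above applies on the relevant support. For the special case of linear (more generally, separable homogeneous-degree-$1$) utilities this can also be read directly off Theorem~\ref{the:potential}: there $f_j(x_j) = x_j$, so $z_j = x_j$, $r_j \equiv 1$, $R_j(z_j) = z_j$, and the potential $\Phi$ of Equation~\eqref{eq:phi} becomes $\sum_i \log\left(\sum_j u_{ij} x_j\right) - (n/B)\sum_j x_j$, which is exactly the Lagrangian of the proportional fairness program with the budget multiplier fixed at $n/B$. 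I expect the scaling argument that pins $\mu$ to $n/B$ — rather than any deeper difficulty — to be the step that needs the most care.
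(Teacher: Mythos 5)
Your argument is correct, but it is not the route the paper takes: the paper treats the corollary as an immediate consequence of Theorem~\ref{the:potential} (for linear utilities one simply observes, as you do in your last paragraph, that $z_j=x_j$, $R_j(z_j)=z_j$, and $\Phi$ becomes the proportional-fairness Lagrangian with multiplier $n/B$), and for general homogeneous-degree-one utilities it invokes a standard Eisenberg--Gale-style change of variables from the market-equilibrium literature and omits the details. Your proof instead works directly from the price-free Lindahl characterization of Theorem~\ref{theorem:eq} (whose proof, as you correctly note, is carried out for arbitrary concave differentiable $U_i$ before specializing to scalar-separable form), collapses the denominator via Euler's identity, and matches the result verbatim against the KKT system of $\max \sum_i \log U_i(\mathbf{x})$ with the budget multiplier pinned at $\mu=n/B$. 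This buys a self-contained argument that handles the linear and general homogeneous cases uniformly, without the variable substitution and without restricting to separable utilities; what the paper's route buys is brevity, since it reuses the potential-function machinery already established. Two small touch-ups: the claim that the budget binds follows most cleanly from homogeneity itself (if $\|\mathbf{x}\|_1<B$, scaling by $B/\|\mathbf{x}\|_1>1$ strictly increases every $U_i(\mathbf{x})>0$), rather than from monotonicity; and note that the Lindahl characterization already forces $\|\mathbf{x}\|_1=B$ (multiply the tight conditions by $x_j$ and sum, using Euler), so the two systems really do live on the same constraint surface, which closes the loop in both directions. Your handling of the $U_i(\mathbf{x})=0$ degeneracy is the right caveat and matches the level of care the paper itself takes.
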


The proof for the linear case is direct, and that for homogeneous functions uses a standard change of variables~\cite{marketEquilibria} and is omitted. As mentioned in Section~\ref{sec:utility}, an interesting special case of homogeneous functions concerns Cobb-Douglas utilities, where $U_i(\mathbf{x}) = \prod_j x_j^{\alpha_{ij}}$ where $\sum_j \alpha_{ij} = 1$ and $\alpha_{ij} > 0$. In this case, if a single agent could choose the whole allocation, the optimal choice would be $x_{j} = \alpha_{ij} B$. Suppose every agent i reveals these optimal allocations for themselves for every item $j$; call this $x_{ij}$. Then it is easy to check that the Lindahl equilibrium sets $x_j = \frac{1}{n} \sum_i x_{ij}$, which is simply the average of the individual monetary allocations.

\paragraph{Elicitation via Quadratic Voting.} For homogeneous functions, consider the proportional fairness program that maximizes $\sum_i \log U_i(\mathbf{x})$ subject to $\|\mathbf{x}\|_1 \le B$ and $\mathbf{x} \ge 0$. Let 
$$ F(\mathbf{x}) = \frac{1}{n} \sum_i \log U_i(\mathbf{x}) - \frac{1}{B} \|\mathbf{x}\|_1$$
It is easy to show that the proportional fairness program coincides with maximizing $F$. It is also easy to show that for such functions,
$$\frac{\partial F(\mathbf{x})}{\partial x_j} =  \frac{1}{n} \sum_i \frac{\frac{\partial}{\partial x_j}U_i(\mathbf{x})}{\sum_m x_m \frac{\partial}{\partial x_m}U_i(\mathbf{x})} -  \frac{1}{B}$$

Suppose users $i$ are drawn from some large population. Then one way to maximize $F$ is to perform stochastic gradient descent. Suppose the current point is $\mathbf{x_t}$. We sample a user $i$ at random, and for $\mathbf{x} = \mathbf{x_t}$,  estimate the quantity:
$$  \frac{\frac{\partial}{\partial x_j}U_i(\mathbf{x})}{\sum_m x_m \frac{\partial}{\partial x_m}U_i(\mathbf{x})} -  \frac{1}{B}$$
This will be an unbiased estimator of the gradient of $F$ at $\mathbf{x_t}$. 

Note now that the above expression only needs an estimate of the relative magnitudes of $\left\{ \frac{\partial}{\partial x_m}U_i(\mathbf{x}) \right\}$ at $\mathbf{x_t}$. In other words, it only needs an estimate of the {\em direction} of the gradient of $U_i(\mathbf{x})$ at $\mathbf{x} = \mathbf{x_t}$.  This in turn can be estimated by presenting user $i$ with an $\ell_2$-ball of radius $\epsilon$ around $\mathbf{x_t}$ and asking the user to maximize her utility, $U_i(\mathbf{x})$. In other words, the user solves the problem:
$$ \mbox{Maximize }\  U_i(\mathbf{x}) \ \ \mbox{s.t.} \ \ || \mathbf{x}  - \mathbf{x_t}||_2 \le \epsilon$$
As $\epsilon \rightarrow 0$,  simple calculus shows that the quantity $\mathbf{x}  - \mathbf{x_t}$ is in the direction of the gradient of  $U_i(\mathbf{x_t})$. This is termed {\em quadratic voting}~\cite{qv}, and gives a way to {\em elicit} enough information from individual voters in order to perform stochastic gradient descent and compute the proportionally fair allocation.

\paragraph{Beyond Proportional Fairness.} When the utility functions are not homogeneous, it is not clear how to express the potential function in Equation (\ref{eq:phi}) as running proportional fairness on a transformed space of allocations. For instance, if $U_i(\mathbf{x}) = \sum_j u_{ij} x_j^{\alpha_j}$, 
$$ \Phi(\mathbf{x}) = \sum_i \log \left(\sum_j \alpha_j u_{ij} x_j^{\alpha_j} \right) - \frac{n}{B} \sum_j \alpha_j x_j$$
This involves a non-linear transform of the utility function and a regularization term, which proportional fairness on any transformed input space does not capture. We also observe that running proportional fairness directly can be far away from the core. Consider an instance where agents are partitioned into groups $G_j$ where all agents in a group have non-zero utility for only item $j$, with utility function $u_{ij}f_j(x_j) = x_j^{\alpha_j}$ for some $\alpha_j \in (0,1)$. Since all groups have disjoint preferences, the core solution allocates $x_j$ in proportion to $|G_j|$. However, proportional fairness maximizes $\sum_{j} |G_j|\log{\left(x_j^{\alpha_j}\right)} = \sum_j \alpha_j |G_j| \log{x_j}$, which allocates $x_j$ in proportion to $\alpha_j |G_j|$.

\section{Saturating Utilities: Approximation and Experiments}
\label{sec:sat}
\label{sec:eval}
We now move to the question of modeling and analyzing real participatory budgeting data. We use data from seven different elections that used the Stanford Participatory Budgeting Platform (SPBP).  This platform (\url{http://pbstanford.org})~\cite{knapsack1,knapsack2} has been used by over 25 PB elections for digital voting. This platform incorporates multiple voting mechanisms including $K$-approval, knapsack, ranking, and comparisons. 

Voters are presented with a ballot containing descriptions of the candidate public projects with associated budgets. They are also presented with an overall budget. They can vote for at most a certain number of these projects, typically 4 or 5 (this voting method is called K-approval). Note that the projects chosen by a voter can exceed the total budget. The data set is therefore a $0/1$ matrix on projects and voters, where a $1$ denotes a vote by the voter for the  project.  The number of voters, $n$, ranges between 200 and 3000 in our datasets, and the number of items $k$ is at most 30.  A typical example is presented in Figure~\ref{EvaluationExample}.

For modeling such data, we need utility functions that respect the budget constraints of individual projects.  It is natural to use the {\sc Saturating} utility model (see Section~\ref{sec:utility}), where the utility of user $i$ is 
\begin{equation*}
U_i(\mathbf{x}) = \sum_j u_{ij} \min{\left(x_j/s_j, 1 \right)}
\end{equation*}
where $s_j$ is the budget of project $j$, and $u_{ij}$ is $1$ if $i$ votes for $j$ and $0$ otherwise. Therefore, the utility for $i$ if $j$ is chosen in the final allocation is $u_{ij} \in \{0,1\}$. Clearly, this function does not satisfy Def.~\ref{def:non-sat}. Nevertheless, we develop an approximation to the core that can be efficiently computed using a {\sc Non-satiating} relaxation of the utility model. Furthermore, we can show an even stronger result empirically: We can efficiently compute the {\em exact} core solutions under this utility model on our real-world data sets. We conclude this section with some observations on the relationship between welfare maximizing and core allocations in the saturating model.

\subsection{Efficiently Approximating the Core} 
\label{sec:saturation} 
Recall the definition of approximation from Def.~\ref{definition:approximateCore}.

\begin{theorem}
Given a collection of {\sc saturating} utility functions, let $s = \min_j s_j$. Then, for any $\epsilon > 0$, an $\alpha$-approximate multiplicative core can be efficiently computed, where $\alpha = (1/\epsilon) \left( B / s\right)^{\epsilon} + 1 - 1/\epsilon$. For $\epsilon = \log (B/s)$, we have a $O\left(\log \frac{B}{s} \right)$ approximation to the core.
\end{theorem}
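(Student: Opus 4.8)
The plan is to relax the kinked saturating coordinate $\min(x_j/s_j,1)$ to a \textsc{Non-satiating} surrogate $\tilde f_j$ that sandwiches it multiplicatively on $[0,B]$, invoke Theorem~\ref{theorem:potential} to compute the Lindahl equilibrium of the relaxed instance by convex programming, and then show that a core guarantee for the relaxed utilities is an $\alpha$-approximate core guarantee for the true ones. For a parameter $\epsilon\in(0,1]$ I would take
\[
\tilde f_j(x)=\begin{cases} x/s_j, & 0\le x\le s_j,\\[2pt] 1+\tfrac1\epsilon\bigl((x/s_j)^\epsilon-1\bigr), & x\ge s_j.\end{cases}
\]
The two pieces and their first derivatives agree at $x=s_j$, so $\tilde f_j$ is $C^1$, strictly increasing, and (being linear and then a concave power) concave.

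Next I would check the two properties the argument needs. First, $x\tilde f_j'(x)$ equals $x/s_j$ on $[0,s_j]$ and $(x/s_j)^\epsilon$ on $[s_j,\infty)$; this is continuous, vanishes at $x=0$, and is monotonically increasing, so $\tilde f_j$ satisfies Definition~\ref{def:non-sat} and Theorem~\ref{theorem:potential} applies to $\tilde U_i(\mathbf x)=\sum_j u_{ij}\tilde f_j(x_j)$. Second, for $0\le x\le B$,
\[
\min(x/s_j,1)\ \le\ \tilde f_j(x)\ \le\ \alpha\cdot\min(x/s_j,1),
\]
where $\alpha=1+\tfrac1\epsilon\bigl((B/s)^\epsilon-1\bigr)=(1/\epsilon)(B/s)^\epsilon+1-1/\epsilon$ and $s=\min_j s_j$: on $[0,s_j]$ both outer expressions equal $x/s_j$, while on $[s_j,B]$ we have $1\le\tilde f_j(x)\le\tilde f_j(B)=1+\tfrac1\epsilon((B/s_j)^\epsilon-1)\le\alpha$ since $s_j\ge s$. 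Consequently $U_i(\mathbf x)\le\tilde U_i(\mathbf x)\le\alpha\,U_i(\mathbf x)$ for every feasible allocation.

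Now let $\mathbf x^\star$ be a Lindahl equilibrium for $\{\tilde U_i\}$ — the maximizer of the convex potential of Theorem~\ref{theorem:potential}, which together with Theorem~\ref{cor:approx} is computable in polynomial time, the arbitrarily small multiplicative loss from solving the program to a fixed tolerance being absorbed into the final bound. Since a Lindahl equilibrium lies in the core, no coalition $S$ and allocation $\mathbf y$ with $\|\mathbf y\|_1\le(|S|/n)B$ satisfies $\tilde U_i(\mathbf y)>\tilde U_i(\mathbf x^\star)$ for all $i\in S$. If $\mathbf x^\star$ violated the $\alpha$-approximate multiplicative core for $\{U_i\}$, such $S,\mathbf y$ would exist with $U_i(\mathbf y)>\alpha\,U_i(\mathbf x^\star)$ for all $i\in S$, whence $\tilde U_i(\mathbf y)\ge U_i(\mathbf y)>\alpha\,U_i(\mathbf x^\star)\ge\tilde U_i(\mathbf x^\star)$ — a contradiction. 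This proves the stated approximation; for the $O(\log(B/s))$ bound, optimizing over $\epsilon$ and setting $\epsilon=1/\ln(B/s)$ gives $(B/s)^\epsilon=e$ and hence $\alpha=1+(e-1)\ln(B/s)=O(\log(B/s))$.

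The main obstacle is designing the surrogate in the first place: it must be \textsc{Non-satiating} — so its tail cannot be bounded — yet remain within a bounded multiplicative factor of the bounded, kinked target $\min(\cdot/s_j,1)$ on all of $[0,B]$. The crucial realization is that one must reproduce the linear segment $x/s_j$ \emph{exactly} for $x\le s_j$ (a naive surrogate $(x/s_j)^\epsilon$ would already incur an unbounded ratio as $x\to0$), and then continue with a degree-$\epsilon$ power so that non-satiation holds while the overshoot above the saturation level is only the factor $\alpha$; verifying monotonicity of $x\tilde f_j'(x)$ across the join at $s_j$ is the delicate point. A minor technical wrinkle — $\tilde f_j$ is only $C^1$ — is handled by a smooth mollification, or by noting that the proof of Theorem~\ref{theorem:potential} uses only first-order optimality conditions.
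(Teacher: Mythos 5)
Your proposal is correct and follows essentially the same route as the paper: the same surrogate $g_j$ (linear up to $s_j$, then $(1/\epsilon)(x/s_j)^\epsilon+1-1/\epsilon$), the same verification of Definition~\ref{def:non-sat} via $x g_j'(x)$, an application of Theorem~\ref{theorem:potential}, and the observation that a multiplicative $\alpha$ sandwich on utilities transfers to an $\alpha$-approximate core. Your parameter choice $\epsilon=1/\ln(B/s)$ (rather than the statement's $\epsilon=\log(B/s)$) is in fact the correct way to obtain the claimed $O(\log(B/s))$ bound.
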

\begin{proof}
Create the modified utility function $g_{j}(x_j)$ for item $j$ as:
\[g_{j}(x_j) =  \begin{cases} 
     \left( x_j / s_j\right) & x_j \leq s_j \\
      \left[ (1/\epsilon) \left( x_j / s_j\right)^{\epsilon} + 1 - 1/\epsilon \right]& x_j > s_j
   \end{cases}
\]
so that $\tilde{U}_i(\mathbf{x}) = \sum_j u_{ij} g_j(x_j)$. Clearly, $g_j$ is concave.  To see that $g_{j}$ satisfies Def.~\ref{def:non-sat}, observe that.
\[x_j g_{j}'(x_j) =  \begin{cases} 
      \left( x_j / s_j\right) & x_j \leq s_j \\
      \left( x_j / s_j\right)^{\epsilon}& x_j > s_j 
   \end{cases}
\]
From this, it is easy to see that $x_j g_{j}'(x_j)$ is increasing and equal to $0$ when $x_j = 0$, so Def.~\ref{def:non-sat} applies. By Theorem~\ref{theorem:potential},  we can efficiently compute a core solution.  Let $s = \min_{j} s_j$.  Then, for any agent $i$ and item $j$, the maximum utility that this modified function would compute is $u_{ij} \left[ (1/\epsilon) \left( B / s\right)^{\epsilon} + 1 - 1/\epsilon \right]$, whereas the true benefit would be $u_{ij}$, where $x_j \geq s_j$.  This gives an approximation factor of $\alpha = (1/\epsilon) \left( B / s\right)^{\epsilon} + 1 - 1/\epsilon$ in the utility functions. This easily implies the same approximation factor for the core. 
\end{proof}


\subsection{Heuristically Computing the Exact Core}
\label{sec:explore}
In our experiments, we show that an exact core solution can indeed be computed. This  crucially uses the characterization in Theorem~\ref{theorem:eq}.  Let $x_j \in [0,s_j]$ denote the current allocation to item $j$, and let $y_j = f'_j(x_j)$. The following complementarity condition relates $x_j$ and $y_j$:
$$\forall j,   \ \ y_j \le \frac{1}{s_j}  \qquad \mbox{and} \qquad x_j < s_j \ \ \Rightarrow \ \ y_j = \frac{1}{s_j} $$

The Lindahl equilibrium condition in Theorem~\ref{theorem:eq} can be written as:
$$\forall j, \ \  \frac{B}{n} \sum_i \frac{u_{ij} y_j}{\sum_m u_{im} x_m y_m} \le 1$$
with equality when $x_j > 0$. Given $\mathbf{x}_{-j}$ and $\mathbf{y}_{-j}$, we perform binary search on $x_j, y_j$ to satisfy the above non-linear equation subject to complementarity on $x_j, y_j$. We repeat this process, at each step choosing that item $j$ with the largest additive violation in the above inequality. We iterate until the Lindahl conditions for all items are satisfied to accuracy $\epsilon$. (e.g.,  $\epsilon  = 1/n$). By Theorem~\ref{cor:approx} in Appendix~\ref{sec:approx}, if this process converges, the result is an $\epsilon$-approximate additive core solution.

One issue is that these dynamics are not theoretically guaranteed to converge. Even empirically, there are instances where we observe cycling. To address this issue,  we perturb the vote matrix by small additive noise, so that  $u_{ij} \leftarrow u_{ij} + \mbox{Uniform}\left(0, \alpha\right)$, where $\alpha$ is a small constant like $1/k^2$. We empirically observe that the process now converges.  In Figure~\ref{fig:conv}, we show this behavior for three datasets with at least 2000 voters and 10 items each.  The convergence is comparable for all seven of our data sets; only three are shown for the sake of readability.

\begin{figure}
\centering
  \centering
  \includegraphics[width=2.5in]{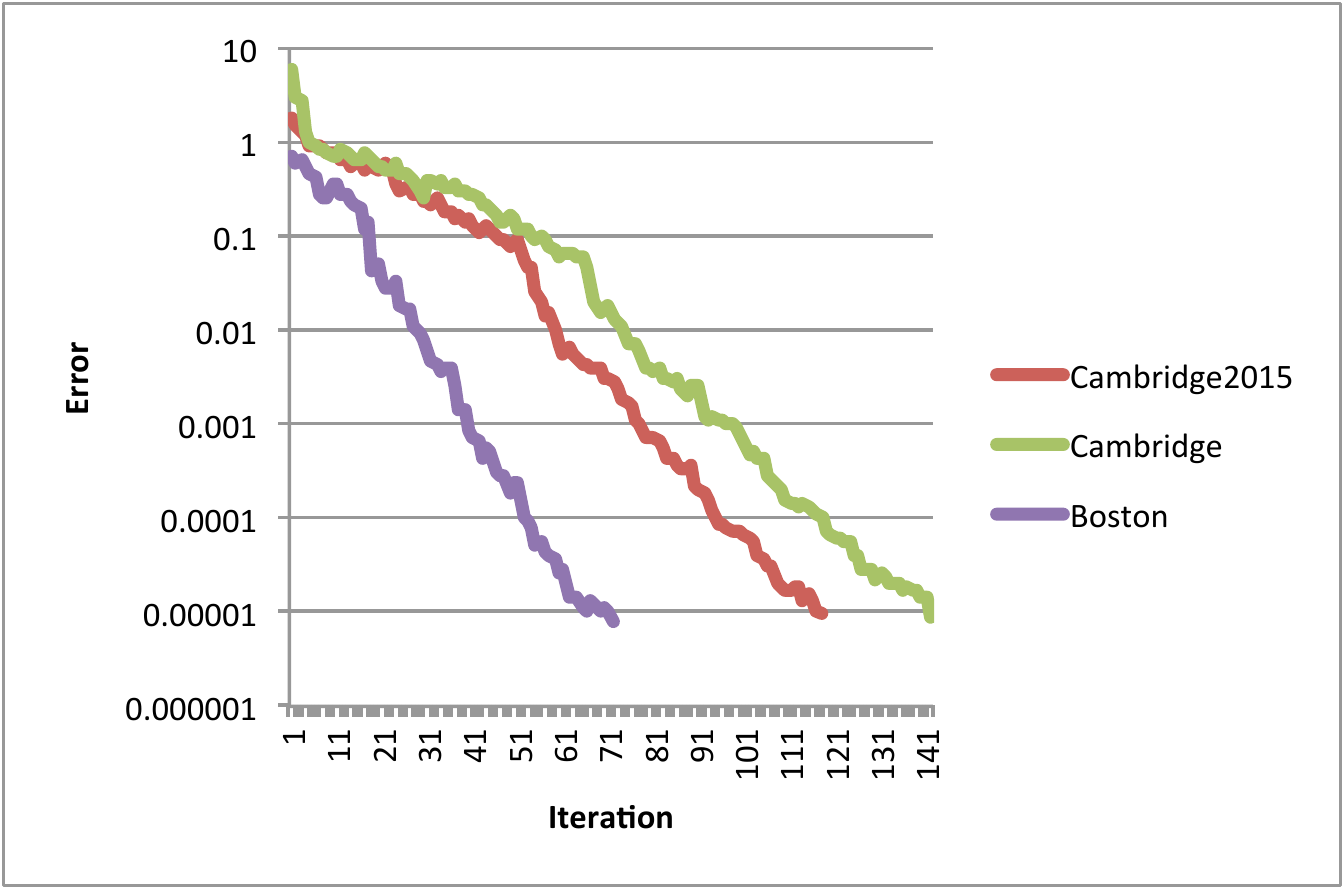}
  \captionof{figure}{Plot of error $\epsilon$ in Lindahl conditions as a function of number of iterations. }
  \label{fig:conv}
\end{figure}

\begin{observation}
Despite lacking a theoretical guarantee of convergence for {\sc Saturating} utilities, we are able to consistently compute  near-exact core solutions for our data sets using binary search on the complementarity conditions.
\end{observation}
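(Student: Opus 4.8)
This statement is an empirical observation rather than a theorem, so the ``proof'' I would give is an experimental one backed by a single conditional guarantee. First I would pin down the exact dynamics: maintain $(\mathbf{x},\mathbf{y})$ with $x_j \in [0,s_j]$, $y_j = f'_j(x_j)$, and at each step pick the item $j$ with the largest additive violation of the Lindahl inequality from Theorem~\ref{theorem:eq}, namely $\frac{B}{n}\sum_i \frac{u_{ij} y_j}{\sum_m u_{im} x_m y_m} > 1$, and re-solve for $(x_j, y_j)$ with $\mathbf{x}_{-j}, \mathbf{y}_{-j}$ fixed. I would observe that along the one-dimensional feasible curve --- $y_j = 1/s_j$ as $x_j$ sweeps $[0,s_j)$, then $x_j = s_j$ as $y_j$ sweeps $(0, 1/s_j]$ --- the left-hand side is monotone in the curve parameter, so a binary search locates the unique point satisfying the Lindahl equation under complementarity (or correctly sets $x_j = 0$ if even the extreme point overshoots). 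This makes each inner update well-defined and exact up to the binary-search tolerance.

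Second, for the rigorous part: if the outer loop ever reaches a state in which every item's Lindahl inequality is violated by at most $\epsilon$, then by Theorem~\ref{cor:approx} in Appendix~\ref{sec:approx} the current allocation is an $\epsilon$-approximate additive core solution; taking $\epsilon = 1/n$ (or smaller) makes this ``near-exact.'' So the entire content of the observation reduces to showing that the outer loop actually drives the maximum violation below the target accuracy on our instances.

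Third, the convergence itself. Because best-response style updates on this fixed-point system can cycle --- and do on some raw vote matrices --- I would first apply the perturbation $u_{ij} \leftarrow u_{ij} + \mbox{Uniform}(0,\alpha)$ with $\alpha$ a small constant such as $1/k^2$, small enough not to perturb the core appreciably but enough to break the cycles empirically. I would then run the perturbed dynamics on all seven SPBP datasets ($n$ roughly $200$--$3000$, $k \le 30$), track the maximum Lindahl violation $\epsilon$ against the number of item-updates, and check that it decays below the target accuracy on every instance, reporting representative convergence traces (Figure~\ref{fig:conv}).

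The main obstacle is precisely the thing the observation concedes: there is no unconditional convergence proof. Coordinate-ascent dynamics on a nonlinear fixed-point system need not converge in general, and for the {\sc Saturating} model we have no potential function to appeal to --- Def.~\ref{def:non-sat} fails, so the concave $\Phi$ of Theorem~\ref{theorem:potential} is unavailable, and the perturbation trick, while effective in practice, is a heuristic with no accompanying guarantee. Hence the honest form of the claim is the conditional statement (convergence $\Rightarrow$ $\epsilon$-approximate core, via Theorem~\ref{cor:approx}) together with the experimental finding that, after the small perturbation, convergence does occur on every dataset we tried.
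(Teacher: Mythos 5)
Your proposal matches the paper's own treatment: the same binary-search-on-complementarity dynamics, the same conditional guarantee (convergence to violation $\epsilon$ implies an $\epsilon$-approximate additive core via Theorem~\ref{cor:approx}, with $\epsilon = 1/n$), the same random perturbation $u_{ij} \leftarrow u_{ij} + \mbox{Uniform}(0,\alpha)$ to break observed cycling, and the same empirical validation on the seven SPBP datasets as in Figure~\ref{fig:conv}. Your added remark that the left-hand side is monotone along the one-dimensional complementarity curve, which justifies the binary search, is a correct refinement consistent with, though not spelled out in, the paper.
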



\subsection{Comparing the Core with {\sc Welfare}}
Given that we can compute the core exactly, we investigate its structure on our datasets.  We define the following vote aggregation schemes that we will use for comparison. The final allocation needs to be integral; we use heuristic methods to convert fractional allocations to integer ones. In the schemes below, the items are sorted in a certain order. Once sorted, for computing integer allocations, the schemes consider items in this order and add the item if its budget is less than the remaining total budget, stopping when all items are exhausted. For item $j$, let $n_j$ denote the number of votes received. Recall that these votes come from simple approval voting and that $s_j$ is the budget (size) of the item.  We can define fractional allocations similarly.  Importantly, both aggregation schemes use the same utility model.
\begin{itemize}
    \item {\sc Core}: Compute a fractional core allocation as described in Section~\ref{sec:explore}.  Let $x_j$ denote the fractional allocation of item $j$. Sort the items in descending of order $\frac{x_j}{s_j}$, which is the fraction to which item $j$ is funded in the fractional allocation.
        \item {\sc Welfare}: Sort the items in descending order of $\frac{n_j}{s_j}$. This  is the allocation that maximizes total (fractional) utility in the {\sc Saturating} utility model from Equation~\ref{eq:sat}.
\end{itemize}

\paragraph{Results.} We compare the outcomes of these algorithms for data sets from seven different real world instances of participatory budgeting.  We consider two measures of the similarities of outcomes: the Jaccard index and Budget similarity.  The Jaccard index for two integral allocations is the ratio of the size of their intersection to the size of their union. The Budget similarity for two fractional allocations $\mathbf{x}$ and $\mathbf{z}$ is defined as $\frac{\sum_j \min(x_j,z_j)}{B}$. Here, $\mathbf{x}$ is the actual monetary amount allocated to the project in the fractional allocation. 

Our composite results are shown in Figure~\ref{fig:similarity}. In Figure~\ref{EvaluationExample}, we show the results for Boston in detail as an example in which there are 10 projects, an overall budget of \$1,000,000 and over 2,000 voting agents.  Note that the integer allocations found by {\sc Core} and {\sc Welfare} are identical; in fact the {\sc Core} allocation is largely integral. 

\begin{figure}
    \centering
    \includegraphics[width=3in]{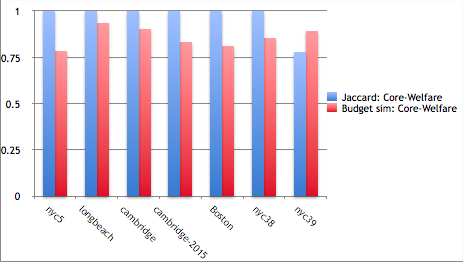}
    \captionof{figure}{Similarity Scores for {\sc Core} vs {\sc Welfare}. }
    \label{fig:similarity}
\end{figure}

\begin{figure}[h!]
\centering
{\small
    \begin{tabular}{| c | c | c | c | c |} 
    \hline
    Project & Budget & Votes & {\sc Core} & {\sc Welfare}  \\
    \hline \hline
        Wicked Free Wifi 2.0 & \$119,000 & 2,054 & 1.00 & 1.00 \\
    \hline
        Water Bottle Refill Stations at Parks & \$260,000 & 1,794 & 1.00 & 1.00\\ 
    \hline
        Hubway Extensions & \$101,600 & 737 & 1.00 & 1.00  \\
    \hline
    Bowdoin St. Roadway Resurfacing & \$100,000 & 611 & 1.00 & 1.00 \\
    \hline
        Bike Lane Installation & \$200,000 & 771 & 0.74 & 1.00  \\
    \hline
        Track at Walker Park & \$240,000 & 672 & 0.33 & 0.91  \\
    \hline
        BCYF HP Dance Studio Renovation & \$286,000 & 759 & 0.31 & 0.00  \\
    \hline
    BLA Gym Renovations & \$475,000 & 1044 & 0.20 & 0.00  \\
    \hline
    Ringer Park Renovation & \$280,000 & 546 & 0.02	& 0.00 \\
    \hline
    Green Renovation for BCYF Pino & \$250,000 & 452 & 0.01 & 0.00 \\
    \hline
    \end{tabular}}
\caption{Aggregation results for Boston. The Budget column lists the project's budget in dollars. The final two columns list the allocation of the project as a fraction of its budget, so that an integral allocation corresponds to 1.}
\label{EvaluationExample}
\end{figure}

\pagebreak

\begin{observation}
{\sc Core} and {\sc Welfare} compute the same integer allocations on almost all of our data sets, showing {\sc Welfare} produces fair allocations in practice. Furthermore, since the fractional allocation produced by {\sc Welfare} is an integer allocation except for one item, the high Budget similarity between {\sc Welfare} and {\sc Core} implies that the fractional core produces almost integer allocations.
\label{observation:welfare}
\end{observation}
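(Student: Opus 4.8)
The plan has two essentially independent parts, mirroring the two sentences of the observation. The first sentence is an empirical claim, so the ``proof'' is a verification: for each of the seven SPBP elections I would run the heuristic of Section~\ref{sec:explore} to obtain the fractional {\sc Core} allocation, apply the common greedy rounding rule (sorting {\sc Core} by $x_j/s_j$ and {\sc Welfare} by $n_j/s_j$, then adding items in order while budget remains) to get two integer allocations, and compute their Jaccard index. The claim is then exactly that this index equals $1$ (or is within one item of it) on all but possibly one instance, which is what Figure~\ref{fig:similarity} and the Boston case in Figure~\ref{EvaluationExample} report; there is nothing to argue here beyond the computation, and the ``fair in practice'' reading is just the interpretation of this coincidence together with Observation~\ref{observation:welfare}'s second half.

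The second sentence does admit a short deductive argument, and that is where I would concentrate. First I would record the structural fact that the fractional {\sc Welfare} allocation $\mathbf{z}$ is integral on every coordinate but one: since {\sc Welfare} funds items fully in decreasing order of $n_j/s_j$ until the budget runs out, every item satisfies $z_j \in \{0, s_j\}$ except for the unique item straddling the budget cutoff, which receives the leftover $B - \sum_{\text{funded}} s_{j'}$. Next I would note that both $\mathbf{z}$ and the fractional {\sc Core} allocation $\mathbf{x}$ spend the entire budget: {\sc Welfare} does so by construction, and the Lindahl/complementarity conditions underlying {\sc Core} force $\sum_j x_j = B$.

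With those two facts, the key step is to turn the observed Budget similarity into an $\ell_1$ bound. Writing $\frac{1}{B}\sum_j \min(x_j, z_j) \ge 1 - \gamma$ for the small observed $\gamma$, and using $\min(a,b) = \tfrac12(a + b - |a-b|)$ coordinatewise, I get $\|\mathbf{x} - \mathbf{z}\|_1 = 2B - 2\sum_j \min(x_j, z_j) \le 2\gamma B$. Since $\mathbf{z}$ is integral off a single coordinate, $\mathbf{x}$ lies within $\ell_1$-distance $2\gamma B$ of a vector that is integral off one coordinate, so the total monetary mass of $\mathbf{x}$ placed at non-integral values of the ``should-be-integral'' items is $O(\gamma B)$ — which is the precise sense in which ``the fractional core produces almost integer allocations,'' and with $\gamma$ read off Figure~\ref{fig:similarity} this is tiny.

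The main obstacle is conceptual rather than technical: the first half of the statement is genuinely empirical and cannot be proved in the usual sense, so the honest plan is to present it as an experimental finding and reserve the deductive work for the second half. The only subtlety there is making sure both allocations are budget-exhausting, since high overlap in the $\min$-sense forces small $\ell_1$ distance only when the two vectors have the same $\ell_1$ mass.
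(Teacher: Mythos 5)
Your proposal is correct and follows essentially the same route as the paper: the first sentence is supported purely by the reported experiments (Figure~\ref{fig:similarity} and the Boston table), and the second sentence is the same short deduction the paper states informally, which you merely make explicit via the identity $\|\mathbf{x}-\mathbf{z}\|_1 = 2B - 2\sum_j \min(x_j,z_j)$ under budget exhaustion. Your flagged subtlety is handled correctly, since summing the tight Lindahl conditions of Theorem~\ref{theorem:eq} weighted by $x_j$ indeed gives $\sum_j x_j = B$, so the extra care you take is sound but does not change the argument.
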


The above observation that the {\sc Core} empirically coincides with {\sc Welfare} is quite surprising. It is easy to construct examples where the core allocation will be very different from welfare maximization. This is particularly pronounced when there is a significant minority of voters who have orthogonal preferences from the majority.  Therefore, one possible explanation for our observation is that users might have approximately independent random preferences over the projects. In Appendix~\ref{app:random}, we explore this possibility more formally.

\section{Homogeneous Utilities: Mechanism Design}
\label{sec:truthful}

\makeatletter
\newcommand{\distas}[1]{\mathbin{\overset{#1}{\kern\z@\sim}}}%
\newsavebox{\mybox}\newsavebox{\mysim}
\newcommand{\distras}[1]{%
  \savebox{\mybox}{\hbox{\kern3pt$\scriptstyle#1$\kern3pt}}%
  \savebox{\mysim}{\hbox{$\sim$}}%
  \mathbin{\overset{#1}{\kern\z@\resizebox{\wd\mybox}{\ht\mysim}{$\sim$}}}%
}

\label{sec;truthful}
In this section, we develop a randomized mechanism that finds an approximately core solution with high probability while ensuring approximate dominant-strategy truthfulness for all agents. In the spirit of~\cite{LiuPycia2011}, we assume the large market limit so that $n \gg k$; in particular, we assume $k = o(n^{1/2})$. We construct our mechanism for the special case of homogeneous utility functions of degree one. For simplicity, we present the mechanism for linear utility functions where $U_i(\mathbf{x}) = \sum_{j=1}^k u_{ij} x_j$, noting that it easily generalizes to degree one homogeneous functions. The values of $u_{ij}$ are reported by the agents. Without loss of generality, these are normalized so that $\|\mathbf{u_i}\|_1 = 1$.  Also without loss of generality, let $B$ be normalized to 1. Recall from Corollary~\ref{cor:prop} that for linear utility functions, the proportional fairness algorithm that maximizes $\sum_i \log U_i(\mathbf{x})$ subject to $\|\mathbf{x}\|_1 \le 1$ and $\mathbf{x} \geq 0$ computes the Lindahl equilibrium. 

\medskip
We will design additive approximations to the core (see Def.~\ref{definition:approximateCore}) that achieve approximate truthfulness in an additive sense (see Def.~\ref{definition:approximateTruthful}). We use the Exponential Mechanism~\cite{differentialPrivacy}  to achieve approximate truthfulness.  We will formally define the mechanism in Section~\ref{sec:exponentialMechanism}. At a high level, this mechanism is a general framework for implementing approximately truthful mechanisms for multi-agent optimization problems.  It arises from techniques in differential privacy where the goal is to minimize the sensitivity of database queries to the information of any given agent in the database.  This is accomplished by adding random noise to query responses, \textit{i.e.}, drawing the response from some distribution. For our application, this corresponds to drawing an allocation from a distribution rather than directly as the solution to an optimization problem; that distribution should be such that an individual agent is very unlikely to change their expected utility by misreporting their preferences.  The distribution is weighted exponentially according to some measure of quality; in our case the quality measure will correspond to approximating the core. However, the application of the Exponential Mechanism is not straightforward since the proportional fairness objective (that computes the Lindahl equilibrium) is not separable when used as a scoring function; the allocation variables are common to all agents. Furthermore, this objective varies widely when one agent misreports utility. We therefore need to define the scoring function carefully.

\subsection{The Scoring Function and its Approximation}
Fix a constant $\gamma \in (0,1)$ to be chosen later. We first define the convex set of feasible allocations as $\mathcal{P} := \{\mathbf{x} \,:\, \mathbf{x} \geq n^{-\gamma}, \|\mathbf{x}\|_1 \leq 1 \}$.  Note that all such allocations are restricted to allocating at least $n^{-\gamma}$ to each project.  Since the utility vector of any agent is normalized so $\|\mathbf{u_{i}}\|_1 = 1$, this implies that every agent gets a baseline utility of at least $n^{-\gamma}$, a fact we use frequently.  We define the following scoring function, which is based on the gradient optimality condition of Proportional Fairness: 
$$q(\mathbf{x}):= n - n^{-\gamma} \max_{\mathbf{y} \in \mathcal{P}} \left( \sum_i \frac{U_i(\mathbf{y})}{U_i(\mathbf{x})}\right)$$

We will approximately maximize this scoring function.  Aside from scaling for technical reasons, the idea is to minimize the quantity $\max_{\mathbf{y} \in \mathcal{P}} \left( \sum_i \frac{U_i(\mathbf{y})}{U_i(\mathbf{x})}\right)$.  This corresponds to finding points from the allocation space from which there is no direction that is preferable to many agents receiving little utility from the current point.  This ties naturally to the concept of a fair solution and can be seen as an interpretation of the gradient optimality condition of the proportional fairness program.  

The trade off in defining the scoring function is between reducing the sensitivity of the function to the report of an individual agent and thus improving the approximation to truthfulness, and having just enough sensitivity so that the mechanism defined in terms of the scoring function provides a good approximation to the core. Recall the $\alpha$-approximate additive core (definition~\ref{definition:approximateCore}).

\begin{theorem}
\label{theorem:core}
If $\max_{\mathbf{y} \in \mathcal{P}} \left( \sum_i \frac{U_i(\mathbf{y})}{U_i(\mathbf{x})}\right) = n$ then the allocation $\mathbf{x}$ is a $\frac{(k-1)n^{-\gamma}}{1-kn^{-\gamma}}$-approximate additive core solution.
\end{theorem}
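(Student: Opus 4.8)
The plan is to argue by contradiction. Suppose $\mathbf{x}$ is \emph{not} a $\frac{(k-1)n^{-\gamma}}{1-kn^{-\gamma}}$-approximate additive core solution. Then, unwinding Def.~\ref{definition:approximateCore} with $B$ normalized to $1$, there is a nonempty coalition $S$ and an allocation $\mathbf{z} \geq 0$ with $\|\mathbf{z}\|_1 \leq |S|/n$ such that $U_i(\mathbf{z}) > U_i(\mathbf{x}) + \alpha$ for every $i \in S$, where $\alpha = \frac{(k-1)n^{-\gamma}}{1-kn^{-\gamma}}$. From this deviation I will build a single allocation $\mathbf{y} \in \mathcal{P}$ with $\sum_i U_i(\mathbf{y})/U_i(\mathbf{x}) > n$, which contradicts the hypothesis that the maximum of this quantity over $\mathcal{P}$ equals $n$ (that maximum being $n$ forces $\sum_i U_i(\mathbf{y})/U_i(\mathbf{x}) \le n$ for every $\mathbf{y} \in \mathcal{P}$). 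Note all these ratios are well defined: since $\mathbf{x} \in \mathcal{P}$ we have $\mathbf{x} \geq n^{-\gamma}\mathbf{1}$ and $\|\mathbf{u}_i\|_1 = 1$, so $U_i(\mathbf{x}) \geq n^{-\gamma} > 0$; and since $\|\mathbf{x}\|_1 \le 1$ we also have $U_i(\mathbf{x}) \le 1$.

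The construction blends two ideas: rescale $\mathbf{z}$ so that it spends the whole budget (it currently uses only a $|S|/n$ fraction), and add a uniform floor of $n^{-\gamma}$ to every coordinate so that the result lands in $\mathcal{P}$. Concretely, put $c := 1 - kn^{-\gamma}$ (positive in the large-market regime $k = o(n^{1/2})$ with the intended choice of $\gamma$), $\rho := n/|S| \geq 1$, and $\mathbf{y} := c\rho\,\mathbf{z} + n^{-\gamma}\mathbf{1}$. Then $\mathbf{y} \geq n^{-\gamma}\mathbf{1}$ and $\|\mathbf{y}\|_1 = c\rho\|\mathbf{z}\|_1 + kn^{-\gamma} \leq c + kn^{-\gamma} = 1$, so $\mathbf{y} \in \mathcal{P}$. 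By linearity and $U_i(\mathbf{1}) = \|\mathbf{u}_i\|_1 = 1$, we get $U_i(\mathbf{y}) = c\rho\,U_i(\mathbf{z}) + n^{-\gamma}$ for every $i$.

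Now I would lower-bound $\sum_i \frac{U_i(\mathbf{y})}{U_i(\mathbf{x})} = c\rho \sum_i \frac{U_i(\mathbf{z})}{U_i(\mathbf{x})} + n^{-\gamma}\sum_i \frac{1}{U_i(\mathbf{x})}$ term by term. In the first sum, discard the (nonnegative) terms with $i \notin S$, apply $U_i(\mathbf{z}) > U_i(\mathbf{x}) + \alpha$ on $S$, and use $1/U_i(\mathbf{x}) \geq 1$, giving $c\rho \sum_{i\in S}\frac{U_i(\mathbf{x})+\alpha}{U_i(\mathbf{x})} \geq c\rho\,|S|(1+\alpha) = cn(1+\alpha)$, with strict inequality since $S \neq \emptyset$. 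For the second sum, $U_i(\mathbf{x}) \le 1$ yields $n^{-\gamma}\sum_i \frac{1}{U_i(\mathbf{x})} \geq n^{1-\gamma}$. Adding and substituting $cn = n - kn^{1-\gamma}$ gives $\sum_i \frac{U_i(\mathbf{y})}{U_i(\mathbf{x})} > n - (k-1)n^{1-\gamma} + cn\alpha$. The choice $\alpha = \frac{(k-1)n^{-\gamma}}{1-kn^{-\gamma}}$ is exactly calibrated so that $cn\alpha = (k-1)n^{1-\gamma}$, whence $\sum_i U_i(\mathbf{y})/U_i(\mathbf{x}) > n$, the desired contradiction.

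The argument has no real obstacle once the right $\mathbf{y}$ is identified; the only delicate point is the trade-off inside the construction: the floor $n^{-\gamma}\mathbf{1}$ and the rescaling factor $c\rho$ must be chosen to keep $\|\mathbf{y}\|_1 \le 1$ and $\mathbf{y} \geq n^{-\gamma}\mathbf{1}$ simultaneously, while still carrying enough of the deviation $\mathbf{z}$ to push the ratio sum past $n$. Getting the bookkeeping $cn(1+\alpha) + n^{1-\gamma}$ to match the claimed $\alpha$ — rather than a slightly larger value — is what pins down the constant $\frac{(k-1)n^{-\gamma}}{1-kn^{-\gamma}}$ in the statement.
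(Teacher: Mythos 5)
Your proof is correct and follows essentially the same route as the paper: you construct exactly the paper's lifted allocation $\frac{n\left(1-kn^{-\gamma}\right)}{|S|}\mathbf{z} + n^{-\gamma}\mathbf{1} \in \mathcal{P}$ and reach the same contradiction using $U_i(\mathbf{x}) \le 1$ and the calibration $n\left(1-kn^{-\gamma}\right)\alpha = (k-1)n^{1-\gamma}$. The only difference is cosmetic: the paper rearranges the hypothesis into an upper bound on $\sum_i \frac{\mathbf{u_i}\cdot\mathbf{z}}{U_i(\mathbf{x})}$ and contradicts it with the deviation lower bound, whereas you lower-bound the ratio sum at the lifted point directly past $n$.
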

\begin{proof}
Suppose by contradiction that $\max_{\mathbf{y} \in \mathcal{P}} \left( \sum_i \frac{U_i(\mathbf{y})}{U_i(\mathbf{x})}\right) = n$, but $\mathbf{x}$ is not a $\frac{(k-1)n^{-\gamma}}{1-kn^{-\gamma}}$-approximate core solution.  Then there is a subset $S$ of agents who want to deviate to some allocation $\mathbf{z}$ where $\|\mathbf{z}\|_1 \leq \frac{|S|}{n}$ and for every agent $i$ in $S$, $U_i(\mathbf{z}) > U_i(\mathbf{x}) + \frac{(k-1)n^{-\gamma}}{1-kn^{-\gamma}}$. Define the allocation 
$\mathbf{z'} := \frac{n\left(1-kn^{-\gamma}\right)}{|S|}\mathbf{z} + n^{-\gamma}\mathbf{1}$.
Clearly $\mathbf{z'} \in \mathcal{P}$, so by assumption we have that $\sum_i \frac{U_i(\mathbf{z'})}{U_i(\mathbf{x})}  \leq n $. Substituting for $\mathbf{z'}$ gives:
$$\sum_i \frac{\mathbf{u_i} \cdot \left( \frac{n\left(1-kn^{-\gamma}\right)}{|S|}\mathbf{z} + n^{-\gamma}\mathbf{1} \right)} {U_i(\mathbf{x})} \leq n$$
Thus, solving for $\sum_i \frac{\mathbf{u_i} \cdot \mathbf{z}}{U_i(\mathbf{x})}$ and simplifying:
\begin{equation*}
 \begin{split}
 \sum_i \frac{\mathbf{u_i} \cdot \mathbf{z}}{U_i(\mathbf{x})} & \leq \frac{|S|}{n\left(1-kn^{-\gamma}\right)}\left(n - \sum_i \frac{n^{-\gamma}}{U_i(\mathbf{x})}\right) \leq \frac{1-n^{-\gamma}}{1-kn^{-\gamma}}|S|
 \end{split}
\end{equation*}
However, recall that since $S$ is a deviating coalition from the approximate core, it should be that $U_i(\mathbf{z}) > U_i(\mathbf{x}) + \frac{(k-1)n^{-\gamma}}{1-kn^{-\gamma}}$ for all agents $i \in S$.  This implies that for all $i \in S$, it should be that $\frac{U_i(\mathbf{z})}{U_i(\mathbf{x})} > 1 + \frac{(k-1)n^{-\gamma}}{1-kn^{-\gamma}}$.  Thus, we can bound the sum over $i$ as $\sum_{i} \frac{U_i(\mathbf{z})}{U_i(\mathbf{x})} > \frac{1-n^{-\gamma}}{1-kn^{-\gamma}}|S|$.  This is a contradiction, completing the proof. 
\end{proof}

Using essentially the same argument, the following corollary follows easily.

\begin{corollary}
\label{corollary:deltaCore}
If $\max_{\mathbf{y} \in \mathcal{P}} \left( \sum_i \frac{U_i(\mathbf{y})}{U_i(\mathbf{x})}\right) = n + \alpha$ then the allocation $\mathbf{x}$ is an $\frac{(k-1)n^{-\gamma} + \alpha n^{-1}}{1-kn^{-\gamma}}$-approximate additive core solution.
\end{corollary}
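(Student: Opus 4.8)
The plan is to rerun the contradiction argument of Theorem~\ref{theorem:core} with the single modification of carrying the extra additive term $\alpha$ through the chain of inequalities. I would assume, toward a contradiction, that the hypothesis holds with value $n+\alpha$ but that $\mathbf{x}$ is not a $\frac{(k-1)n^{-\gamma}+\alpha n^{-1}}{1-kn^{-\gamma}}$-approximate additive core solution, so that some coalition $S$ has a witness allocation $\mathbf{z}$ with $\mathbf{z}\ge 0$, $\|\mathbf{z}\|_1 \le |S|/n$, and $U_i(\mathbf{z}) > U_i(\mathbf{x}) + \frac{(k-1)n^{-\gamma}+\alpha n^{-1}}{1-kn^{-\gamma}}$ for every $i \in S$.

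First I would form the same lifted point $\mathbf{z'} := \frac{n(1-kn^{-\gamma})}{|S|}\mathbf{z} + n^{-\gamma}\mathbf{1}$ and verify $\mathbf{z'}\in\mathcal{P}$ exactly as in the theorem: the coordinatewise lower bound $n^{-\gamma}$ is immediate, and the $\ell_1$ bound uses $\|\mathbf{z}\|_1 \le |S|/n$. Next I would apply the hypothesis, now reading $\sum_i U_i(\mathbf{z'})/U_i(\mathbf{x}) \le n+\alpha$, expand $U_i(\mathbf{z'}) = \frac{n(1-kn^{-\gamma})}{|S|}U_i(\mathbf{z}) + n^{-\gamma}$ by linearity and $\|\mathbf{u_i}\|_1 = 1$, solve for $\sum_i U_i(\mathbf{z})/U_i(\mathbf{x})$, and use $U_i(\mathbf{x}) \le 1$ (which follows from $\|\mathbf{u_i}\|_1 = 1$ and $\|\mathbf{x}\|_1 \le 1$) so that $\sum_i n^{-\gamma}/U_i(\mathbf{x}) \ge n^{1-\gamma}$, bounding the sum above by $\frac{1+\alpha n^{-1}-n^{-\gamma}}{1-kn^{-\gamma}}\,|S|$. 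This is the Theorem~\ref{theorem:core} computation with the one change that the leading $n$ becomes $n+\alpha$, which is exactly why an $\alpha n^{-1}$ term surfaces after dividing through by $n$.

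For the matching lower bound, the deviation inequality together with $U_i(\mathbf{x}) \le 1$ gives $U_i(\mathbf{z})/U_i(\mathbf{x}) > 1 + \frac{(k-1)n^{-\gamma}+\alpha n^{-1}}{1-kn^{-\gamma}}$ for each $i \in S$; using the one-line identity $1 + \frac{(k-1)n^{-\gamma}+\alpha n^{-1}}{1-kn^{-\gamma}} = \frac{1+\alpha n^{-1}-n^{-\gamma}}{1-kn^{-\gamma}}$, summing over $S$ and discarding the nonnegative terms with $i\notin S$ (harmless since $U_i(\mathbf{z})\ge 0$ and $U_i(\mathbf{x}) \ge n^{-\gamma} > 0$ on $\mathcal{P}$) yields $\sum_i U_i(\mathbf{z})/U_i(\mathbf{x}) > \frac{1+\alpha n^{-1}-n^{-\gamma}}{1-kn^{-\gamma}}\,|S|$, contradicting the upper bound. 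There is no genuine obstacle: the only thing to get right is the elementary identity showing that the claimed approximation parameter is precisely the value that makes the upper and lower bounds clash, together with checking that every inequality from the proof of Theorem~\ref{theorem:core} survives the substitution $n \mapsto n+\alpha$ unchanged.
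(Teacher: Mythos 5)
Your proposal is correct and follows essentially the same route as the paper, which proves Corollary~\ref{corollary:deltaCore} by rerunning the contradiction argument of Theorem~\ref{theorem:core} verbatim with the hypothesis value $n$ replaced by $n+\alpha$, yielding the upper bound $\frac{1-n^{-\gamma}+\alpha n^{-1}}{1-kn^{-\gamma}}|S|$ against the matching lower bound from the deviation inequality. Your bookkeeping (feasibility of $\mathbf{z'}$, the use of $U_i(\mathbf{x})\le 1$, and the identity $1+\frac{(k-1)n^{-\gamma}+\alpha n^{-1}}{1-kn^{-\gamma}}=\frac{1-n^{-\gamma}+\alpha n^{-1}}{1-kn^{-\gamma}}$) matches the paper's calculation exactly.
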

We now bound the sensitivity of $q(\mathbf{x})$ with respect to the report of one agent, as well as its range.  

\begin{lemma}
\label{claim:sensitivity}
$q(\mathbf{x}) \geq 0$ and $\max_{\mathbf{x} \in \mathcal{P}} q(\mathbf{x}) = n - n^{1-\gamma}$. Further, if $\Delta q$ is the largest possible difference in the scoring function between two sets of input differing only on the report of a single agent $i'$, ({\em i.e.}, the {\em sensitivity} of $q$), then $\Delta q = 1$.
\end{lemma}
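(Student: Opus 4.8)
The plan is to establish the three claims in sequence, each by a direct estimate using only the constraint structure of $\mathcal{P}$ and the normalization $\|\mathbf{u_i}\|_1 = 1$.

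\textbf{Nonnegativity and maximum of $q$.} For any $\mathbf{x} \in \mathcal{P}$ and any $\mathbf{y} \in \mathcal{P}$, each agent has $U_i(\mathbf{x}) \geq n^{-\gamma}$ (since $\mathbf{x} \geq n^{-\gamma}\mathbf{1}$ and $\|\mathbf{u_i}\|_1 = 1$) and $U_i(\mathbf{y}) \leq \|\mathbf{u_i}\|_1 \cdot \|\mathbf{y}\|_\infty \leq 1$ (using $\|\mathbf{y}\|_1 \leq 1$ so each coordinate is at most $1$). Hence $\sum_i U_i(\mathbf{y})/U_i(\mathbf{x}) \leq n \cdot n^{\gamma} = n^{1+\gamma}$, which gives $q(\mathbf{x}) \geq n - n^{-\gamma} n^{1+\gamma} = 0$. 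For the maximum, the point $\mathbf{y} = \mathbf{x}$ is always feasible and yields $\sum_i U_i(\mathbf{y})/U_i(\mathbf{x}) = n$, so $\max_{\mathbf{y} \in \mathcal{P}}(\cdots) \geq n$ for every $\mathbf{x}$, whence $q(\mathbf{x}) \leq n - n^{-\gamma} n = n - n^{1-\gamma}$. To see this bound is attained, I would exhibit an $\mathbf{x}$ for which $\mathbf{y} = \mathbf{x}$ is actually the maximizer; the natural candidate is the proportionally fair allocation restricted to $\mathcal{P}$ (or, more concretely, an instance where all agents have identical utility vectors, so that the ratio-sum is maximized at $\mathbf{y} = \mathbf{x}$ by the gradient optimality of proportional fairness). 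This shows $\max_{\mathbf{x} \in \mathcal{P}} q(\mathbf{x}) = n - n^{1-\gamma}$.

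\textbf{Sensitivity.} Let $\mathbf{u}$ and $\mathbf{u}'$ be two utility profiles differing only in agent $i'$'s report, and let $q, q'$ be the corresponding scoring functions evaluated at the same $\mathbf{x}$. Write $G(\mathbf{x}) = \max_{\mathbf{y} \in \mathcal{P}} \sum_i U_i(\mathbf{y})/U_i(\mathbf{x})$ and $G'(\mathbf{x})$ the analogue. The terms for all agents $i \neq i'$ are identical in the two sums, so for any fixed $\mathbf{y}$ the two ratio-sums differ by exactly $U_{i'}(\mathbf{y})/U_{i'}(\mathbf{x}) - U'_{i'}(\mathbf{y})/U'_{i'}(\mathbf{x})$, and each of these single-agent ratios lies in $[0,1]\big/[n^{-\gamma},1] \subseteq [0, n^{\gamma}]$. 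Taking maxima over $\mathbf{y}$, one gets $|G(\mathbf{x}) - G'(\mathbf{x})| \leq n^{\gamma}$, hence $|q(\mathbf{x}) - q'(\mathbf{x})| = n^{-\gamma}|G(\mathbf{x}) - G'(\mathbf{x})| \leq 1$. So $\Delta q \leq 1$. For the matching lower bound I would pick a two-item instance where under $\mathbf{u}$ agent $i'$ contributes nothing (its ratio is forced near $0$ at the maximizing $\mathbf{y}$) and under $\mathbf{u}'$ its ratio is forced near the extreme value $n^{\gamma}$ — e.g. $\mathbf{x}$ allocating the minimum $n^{-\gamma}$ to the single item agent $i'$ wants, all other agents having disjoint preferences so they do not constrain the maximizing $\mathbf{y}$ — giving $\Delta q = 1$ exactly.

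\textbf{Main obstacle.} The inequalities ($\Delta q \leq 1$, $q \geq 0$, $q \leq n - n^{1-\gamma}$) are all routine once the two bounds $U_i(\mathbf{x}) \geq n^{-\gamma}$ and $U_i(\mathbf{y}) \leq 1$ are in hand. The only genuinely delicate point is verifying that the upper bounds are \emph{tight} — that there really exist allocations and profiles achieving $q = n - n^{1-\gamma}$ and sensitivity exactly $1$ — since this requires constructing explicit witnesses rather than just estimating. I expect the tightness of the sensitivity bound to be the part demanding the most care, because it requires a profile where perturbing one agent swings the \emph{argmax} over $\mathbf{y}$ enough to move a single-agent ratio across (essentially) the full interval $[0, n^{\gamma}]$ without the other agents' terms compensating; choosing an instance with pairwise-disjoint preferences among the remaining agents handles exactly this.
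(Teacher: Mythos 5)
Your estimates for $q(\mathbf{x})\ge 0$, for $q(\mathbf{x})\le n-n^{1-\gamma}$, and for the sensitivity upper bound $\Delta q\le 1$ are exactly the paper's argument (fix the maximizing $\mathbf{y}$, isolate agent $i'$'s ratio, bound it by $n^{\gamma}$ using $U_{i'}(\mathbf{x})\ge n^{-\gamma}$ and $U_{i'}(\mathbf{y})\le 1$). The gap is in the attainment of the maximum: you correctly identify the proportionally fair point over $\mathcal{P}$ as the candidate, but you leave the verification as a to-do and hedge with "an instance where all agents have identical utility vectors." That fallback does not prove the lemma: the statement $\max_{\mathbf{x}\in\mathcal{P}} q(\mathbf{x})=n-n^{1-\gamma}$ must hold for \emph{every} reported utility profile, because it is invoked as $OPT=n(1-n^{-\gamma})$ in the proof of Theorem~\ref{theorem:approximation} for whatever profile the agents actually report. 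The missing step is short and is what the paper uses: let $\mathbf{x}^*$ maximize the concave function $\sum_i \log U_i(\mathbf{x})$ over the convex set $\mathcal{P}$; for linear (or degree-one homogeneous) utilities the first-order optimality condition $\nabla\bigl(\sum_i \log U_i\bigr)(\mathbf{x}^*)\cdot(\mathbf{y}-\mathbf{x}^*)\le 0$ reads $\sum_i U_i(\mathbf{y})/U_i(\mathbf{x}^*)\le n$ for all $\mathbf{y}\in\mathcal{P}$, with equality at $\mathbf{y}=\mathbf{x}^*$, so $\max_{\mathbf{y}\in\mathcal{P}}\sum_i U_i(\mathbf{y})/U_i(\mathbf{x}^*)=n$ and hence $q(\mathbf{x}^*)=n-n^{1-\gamma}$ for the given profile. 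You should state and check this; it is the only non-routine ingredient.

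A second, smaller point: your planned witness showing $\Delta q=1$ exactly cannot succeed. Every $\mathbf{y}\in\mathcal{P}$ has all $k$ coordinates at least $n^{-\gamma}$, so $U_{i'}(\mathbf{y})\le 1-(k-1)n^{-\gamma}<1$, and the single-agent ratio is strictly below $n^{\gamma}$ (likewise it is bounded away from $0$ from below); hence the supremum of the change in $q$ is strictly less than $1$. This is harmless: what the exponential mechanism needs, and what the paper actually proves, is only the upper bound $\Delta q\le 1$ (the "$=1$" in the statement is best read as "we may take the sensitivity parameter to be $1$"), so you should drop the exact-tightness construction rather than try to repair it.
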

\begin{proof}

To see the first part, note that $U_i(\mathbf{x}) \ge n^{-\gamma}$ for all $\mathbf{x} \in \mathcal{P}$. Therefore, 
$$ q(\mathbf{x})  = n - n^{-\gamma} \max_{\mathbf{y} \in \mathcal{P}} \left( \sum_i \frac{U_i(\mathbf{y})}{U_i(\mathbf{x})}\right) \ge n - n^{-\gamma} \sum_i \frac{1}{n^{-\gamma}} = 0$$
By the optimality condition of the Proportional Fairness convex program, 
$$ \min_{\mathbf{x} \in \mathcal{P}} \max_{\mathbf{y} \in \mathcal{P}} \left( \sum_i \frac{U_i(\mathbf{y})}{U_i(\mathbf{x})}\right) = n$$
Therefore, $\max_{\mathbf{x} \in \mathcal{P}} q(\mathbf{x}) = n - n^{1-\gamma}$. Similarly, when agent $i'$ misreports:
\begin{eqnarray*}
\Delta q & = & \left| \left( n - n^{-\gamma} \max_{\mathbf{y} \in \mathcal{P}} \left( \sum_i \frac{U_i(\mathbf{y})}{U_i(\mathbf{x})}\right)\right) - \left( n - n^{-\gamma} \max_{\mathbf{y} \in \mathcal{P}} \left( \frac{U_{i'}(\mathbf{y})}{U_{i'}(\mathbf{x})} + \sum_{i \neq i'} \frac{U_i(\mathbf{y})}{U_i(\mathbf{x})}\right)\right) \right| \\
& \leq & n^{-\gamma} \left( \max_{\mathbf{y} \in \mathcal{P}} \left( \frac{U_{i'}(\mathbf{y})}{U_{i'}(\mathbf{x})} \right) - 1 \right)  \leq 1
\end{eqnarray*}
The first inequality follows because we can assume w.l.o.g. that if the maximizing $\mathbf{y}$ changed for the misreported data, it yields a score no worse than the score of the original $\mathbf{y}$ on the misreported data, since that original $\mathbf{y}$ could have been chosen. 
\end{proof}

\subsection{Exponential Mechanism} 
\label{sec:exponentialMechanism}
We now plug the above scoring function into the Exponential mechanism from  \cite{differentialPrivacy}. We use $\epsilon > 0$ as the privacy approximation parameter, and thus as a parameter for the approximation of truthfulness.

\begin{definition}
Define $\mu$ to be a uniform probability distribution over all feasible allocations $\mathbf{x} \in \mathcal{P}$.  For a given set of utilities, let the mechanism $\zeta^{\epsilon}_q$ be given by the rule: $$\zeta^{\epsilon}_q:= \mbox{ choose } \mathbf{x} \mbox{ with probability proportional to } e^{\epsilon q(\mathbf{x})} \mu(\mathbf{x})$$
\end{definition}

The following lemma follows by using the sensitivity bound from Lemma~\ref{claim:sensitivity} in Theorem 6 from~\cite{differentialPrivacy}.

\begin{lemma}
\label{lemma:truthful}
$\zeta^{\epsilon}_q$ is $\left(e^{2\epsilon} - 1\right)$-approximately truthful.
\end{lemma}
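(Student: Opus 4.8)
The plan is to invoke the standard truthfulness guarantee of the Exponential Mechanism (Theorem~6 of~\cite{differentialPrivacy}), which states that a mechanism selecting outcome $\mathbf{x}$ with probability proportional to $e^{\epsilon q(\mathbf{x})}\mu(\mathbf{x})$ is $(2\epsilon\,\Delta q)$-differentially private, and, more to the point here, that an agent's expected score (and hence expected utility, as we argue below) can change by at most a multiplicative factor of $e^{2\epsilon \Delta q}$ when that agent misreports. First I would recall from Lemma~\ref{claim:sensitivity} that the sensitivity of our scoring function satisfies $\Delta q = 1$, so the relevant factor is $e^{2\epsilon}$. The remaining work is to translate a multiplicative change in the agent's \emph{expected selected score} into an additive bound of the form $\delta = e^{2\epsilon}-1$ on the change in the agent's \emph{expected utility}, matching Definition~\ref{definition:approximateTruthful}.

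The key steps, in order, are: (1) Fix an agent $i'$, fix the reports of all other agents, and let $\zeta$ denote the output distribution when $i'$ reports truthfully and $\zeta'$ the output distribution when $i'$ misreports. (2) Observe that the agent's true utility $U_{i'}(\mathbf{x})$ lies in $[n^{-\gamma}, 1]$ on $\mathcal{P}$ (the lower bound because $\mathbf{x} \ge n^{-\gamma}\mathbf{1}$ and $\|\mathbf{u}_{i'}\|_1 = 1$, the upper bound because $\|\mathbf{x}\|_1 \le 1$); in particular utilities are bounded in $[0,1]$ after the normalization $B = 1$. (3) Apply the Exponential Mechanism utility guarantee: for any bounded objective, $\mathbb{E}_{\mathbf{x}\sim\zeta'}[U_{i'}(\mathbf{x})] \le e^{2\epsilon\Delta q}\,\mathbb{E}_{\mathbf{x}\sim\zeta}[U_{i'}(\mathbf{x})]$, since the true-report distribution is exactly the one that maximizes (in the exponential-weighting sense) the agent's own contribution to $q$ — this is the point of having defined $q$ via the proportional-fairness gradient condition so that it aligns with the agent's utility. (4) Conclude that the gain from misreporting is at most
$$\mathbb{E}_{\mathbf{x}\sim\zeta'}[U_{i'}(\mathbf{x})] - \mathbb{E}_{\mathbf{x}\sim\zeta}[U_{i'}(\mathbf{x})] \le \left(e^{2\epsilon}-1\right)\mathbb{E}_{\mathbf{x}\sim\zeta}[U_{i'}(\mathbf{x})] \le e^{2\epsilon}-1,$$
using $\mathbb{E}[U_{i'}] \le 1$. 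This is exactly the $\left(e^{2\epsilon}-1\right)$-approximate truthfulness claimed.

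The main obstacle — and the step that genuinely requires the careful scoring-function design — is step (3): arguing that the truthful report is the one the Exponential Mechanism "favors" for agent $i'$'s utility. This is not automatic because $q(\mathbf{x})$ is a \emph{global} quantity (it sums $U_i(\mathbf{y})/U_i(\mathbf{x})$ over all agents and is common to everyone), not a per-agent objective, so the black-box Exponential Mechanism truthfulness statement does not directly apply. The resolution is that the Exponential Mechanism's guarantee can be applied with the agent's \emph{own utility} $U_{i'}(\cdot)$ playing the role of the "value" function, provided $q$ has bounded sensitivity to $i'$'s report (which Lemma~\ref{claim:sensitivity} supplies) and provided $U_{i'}$ is bounded (step (2)); the sensitivity bound $\Delta q = 1$ controls how much $i'$'s misreport can distort the \emph{distribution} $\zeta$, and boundedness of $U_{i'}$ then controls the resulting change in expected utility. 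I would make this precise by citing the exact form of Theorem~6 of~\cite{differentialPrivacy} and noting that it yields, for any function bounded in $[0,1]$ of the output, a multiplicative distortion of $e^{2\epsilon\Delta q}$ in expectation between the two neighboring input profiles — with the direction of the inequality being the favorable one because reporting truthfully is a best response within the exponential-weighting family when the scoring function was built from the proportional-fairness optimality condition.
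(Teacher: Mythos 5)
Your derivation is essentially the paper's own argument: establish that $\zeta^{\epsilon}_q$ is $2\epsilon$-differentially private via the Exponential Mechanism guarantee with the sensitivity bound $\Delta q = 1$ from Lemma~\ref{claim:sensitivity}, conclude that a single agent's misreport changes her expected utility by at most a multiplicative factor $e^{2\epsilon}$, and then use $U_{i'}(\mathbf{x}) \le 1$ on $\mathcal{P}$ to turn this into the additive bound $e^{2\epsilon}-1$ of Definition~\ref{definition:approximateTruthful}. One correction to your step (3): the claim that the inequality's direction relies on truthful reporting being ``favored'' by the exponential weighting, because $q$ was built from the proportional-fairness gradient condition, is both unnecessary and not true --- the mechanism is only approximately truthful precisely because no such best-response property holds. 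What you actually need (and already have) is the symmetric differential-privacy guarantee: for neighboring report profiles the output distributions have pointwise likelihood ratio at most $e^{2\epsilon \Delta q}$, so the expectation of \emph{any} nonnegative function of the outcome, in particular $U_{i'}$, can differ between the two distributions by at most that factor in either direction; no alignment between $q$ and the agent's utility enters the argument. With that side-remark removed, your proof matches the paper's.
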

The primary result of this section demonstrates that $\zeta^{\epsilon}_q$ can still find an approximate core solution while providing approximate truthfulness.
\begin{theorem}
\label{theorem:approximation}
If $k$ is $o(\sqrt{n})$ and $\frac{1}{\epsilon} > \frac{k n}{(n-k^2)\ln{n}}$ then $\zeta^{\epsilon}_q$ can be used to choose an allocation $\mathbf{x}$ that is an $O\left(\frac{k\ln{n}}{\epsilon\sqrt{n}}\right)$-approximate additive core solution w.p. $1-\frac{1}{n}$. 
\end{theorem}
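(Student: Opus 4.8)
The plan is to run the generic accuracy analysis of the Exponential Mechanism on the scoring function $q$, and then convert the resulting guarantee on $\max_{\mathbf y \in \mathcal P}\sum_i U_i(\mathbf y)/U_i(\mathbf x)$ into an additive core bound via Corollary~\ref{corollary:deltaCore}. Throughout I would fix the free parameter to $\gamma = 1/2$, so that $\mathcal P = \{\mathbf x : \mathbf x \ge n^{-1/2},\ \|\mathbf x\|_1 \le 1\}$ is a (non-degenerate, since $k = o(\sqrt n)$) scaled simplex; this choice also forces $1 - k n^{-\gamma} = 1 - o(1)$, which we need for the denominator in Corollary~\ref{corollary:deltaCore}. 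From Lemma~\ref{claim:sensitivity} we already have $\Delta q = 1$ and $\mathrm{OPT} := \max_{\mathbf x \in \mathcal P} q(\mathbf x) = n - n^{1-\gamma} = n - \sqrt n$; let $\mathbf x^\star$ denote a maximizer of $q$ on $\mathcal P$ (equivalently, a proportional-fairness optimum over $\mathcal P$).

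First I would record the standard accuracy bound for $\zeta^{\epsilon}_q$ (cf.\ \cite{differentialPrivacy}) in the form we need. For $\beta > 0$ put $\mathcal S_\beta := \{\mathbf x \in \mathcal P : q(\mathbf x) \ge \mathrm{OPT} - \beta\}$. Since the normalizing constant $Z := \int_{\mathcal P} e^{\epsilon q}\, d\mu$ is at least $e^{\epsilon(\mathrm{OPT} - \beta)}\,\mu(\mathcal S_\beta)$ (restrict the integral to $\mathcal S_\beta$), while $\int_{\{q \le \mathrm{OPT} - 2\beta\}} e^{\epsilon q}\, d\mu \le e^{\epsilon(\mathrm{OPT} - 2\beta)}\mu(\mathcal P) = e^{\epsilon(\mathrm{OPT}-2\beta)}$, we get $\Pr[\,q(\mathbf x) \le \mathrm{OPT} - 2\beta\,] \le e^{-\epsilon\beta}/\mu(\mathcal S_\beta)$. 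So everything reduces to a lower bound on $\mu(\mathcal S_\beta)$ and a good choice of $\beta$.

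The key step, and the one I expect to be the main obstacle, is lower bounding $\mu(\mathcal S_\beta)$; I would do this by Lipschitz continuity plus the simplex geometry of $\mathcal P$. On $\mathcal P$ every agent has $U_i(\mathbf x) = \mathbf u_i \cdot \mathbf x \ge n^{-\gamma}$ (using $\|\mathbf u_i\|_1 = 1$), and since also $\|\mathbf u_i\|_\infty \le 1$ and $U_i(\mathbf y) \le \|\mathbf y\|_1 \le 1$, a short computation shows that $\mathbf x \mapsto 1/U_i(\mathbf x)$ is $n^{2\gamma}$-Lipschitz in the $\ell_1$ norm; summing the $n$ terms $U_i(\mathbf y)/U_i(\mathbf x)$, taking the max over $\mathbf y$, and multiplying by the $n^{-\gamma}$ factor out front shows $q$ is $L$-Lipschitz on $\mathcal P$ with $L = n^{1+\gamma} = n^{3/2}$. (In fact $q$ is concave — $n$ minus a $\max$ over $\mathbf y$ of nonnegative combinations of the convex functions $1/U_i$ — but Lipschitzness suffices.) Now for $\theta \in (0,1]$ the set $\{(1-\theta)\mathbf x^\star + \theta\mathbf x : \mathbf x \in \mathcal P\}$ is a copy of $\mathcal P$ scaled by $\theta$ about $\mathbf x^\star$; it lies in $\mathcal P$ by convexity, has $\mu$-measure exactly $\theta^k$, and every point of it is within $\ell_1$-distance $\theta\,\mathrm{diam}_1(\mathcal P) \le 2\theta$ of $\mathbf x^\star$. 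Choosing $\theta = \beta/(2L)$ makes this whole set lie in $\mathcal S_\beta$, hence $\mu(\mathcal S_\beta) \ge (\beta/(2L))^k$, i.e.\ $\ln(1/\mu(\mathcal S_\beta)) \le k\ln(2n^{3/2}/\beta)$. The delicate points here are extracting an honest Lipschitz constant for $q$ (the inner maximum over $\mathbf y$ and the blow-up of $1/U_i(\mathbf x)$ near the axes — which is exactly why $\mathcal P$ is bounded away from the axes), and making the volume bound robust to $\mathbf x^\star$ lying on the boundary of $\mathcal P$, which the ``shrink toward $\mathbf x^\star$'' trick handles without loss.

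Combining the two bounds, $\Pr[\,q(\mathbf x) \le \mathrm{OPT} - 2\beta\,] \le \exp(-\epsilon\beta + k\ln(2n^{3/2}/\beta))$. Taking $\beta = \Theta(k\ln n/\epsilon)$ makes this exponent at most $-\ln n$ once $\beta \ge 1$ (so that $\ln(2n^{3/2}/\beta) = O(\ln n)$); the hypothesis $1/\epsilon > \frac{kn}{(n-k^2)\ln n}$, which for $k = o(\sqrt n)$ means $\epsilon = O(\ln n / k)$, is precisely what guarantees $\beta = \Omega(k^2) \ge 1$ and keeps this slack affordable. Hence with probability $1 - 1/n$ we have $q(\mathbf x) \ge \mathrm{OPT} - \alpha'$ with $\alpha' = 2\beta = O(k\ln n/\epsilon)$, which rearranges (using $\mathrm{OPT} = n - n^{1-\gamma}$) to $\max_{\mathbf y \in \mathcal P}\sum_i U_i(\mathbf y)/U_i(\mathbf x) \le n + \alpha' n^{\gamma} = n + \alpha'\sqrt n$. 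Feeding $\alpha = \alpha'\sqrt n$ into Corollary~\ref{corollary:deltaCore} gives an additive core factor $\frac{(k-1)n^{-1/2} + \alpha'/\sqrt n}{1 - k n^{-1/2}} = O\!\left(\frac{k\ln n}{\epsilon\sqrt n}\right)$, since $k = o(\sqrt n)$ keeps the denominator bounded below by a constant and the $\alpha'/\sqrt n$ term dominates the numerator. This is the claimed bound, holding with probability $1-1/n$.
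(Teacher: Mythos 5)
Your proposal is correct, and it follows the paper's overall skeleton (Exponential Mechanism accuracy bound, then Corollary~\ref{corollary:deltaCore} with $\gamma=1/2$ and $k=o(\sqrt n)$), but it handles the one genuinely technical step --- lower-bounding the measure of the near-optimal set --- by a different argument. The paper fixes $t=\frac{k+1}{\epsilon}\ln n$, takes the maximizer $\mathbf{x}^*$ of $q$, picks a coordinate $j'$ with $x^*_{j'}\ge 1/k$, and builds an explicit axis-aligned box $S_\delta$ of side $\delta=1/n$ around $\mathbf{x}^*$ (raising the other coordinates and compensating on $j'$ to stay feasible); it then shows pointwise that $U_i(\mathbf{x})\ge\frac{n-k^2}{n}U_i(\mathbf{x}^*)$ on the box, so $\max_{\mathbf{y}\in\mathcal{P}}\sum_i U_i(\mathbf{y})/U_i(\mathbf{x})\le\frac{n^2}{n-k^2}$, and uses the hypothesis $\frac{1}{\epsilon}>\frac{kn}{(n-k^2)\ln n}$ precisely to conclude $S_\delta\subseteq S_t$, giving $\mu(S_t)\ge n^{-k}$ exactly. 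You instead bound the Lipschitz constant of $q$ on $\mathcal{P}$ (using $U_i\ge n^{-\gamma}$, which is exactly why $\mathcal{P}$ is bounded away from the axes) and shrink all of $\mathcal{P}$ toward $\mathbf{x}^*$ by a homothety of ratio $\theta=\beta/(2L)$, getting $\mu(S_\beta)\ge\bigl(\beta/(2n^{3/2})\bigr)^k$; the hypothesis on $\epsilon$ enters only to guarantee $\beta\ge 1$ so the extra $\ln(1/\beta)$ is absorbed into $O(k\ln n)$. Both routes give $\ln\bigl(1/\mu(S_t)\bigr)=O(k\ln n)$ and then the identical finish through Corollary~\ref{corollary:deltaCore}. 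Your version trades the paper's clean $n^{-k}$ and its per-agent ratio bookkeeping for a softer, more generic argument (Lipschitzness plus convexity of $\mathcal{P}$), which would adapt more easily to other scoring functions or feasible regions; the paper's box construction is more elementary and makes the role of the $\epsilon$-hypothesis (containment in $S_t$) more transparent. The numerical checks in your final step (dominance of the $\alpha'/\sqrt n$ term and $1-kn^{-1/2}$ bounded below) are right, and your self-contained derivation of the accuracy inequality $\Pr[q(\mathbf{x})\le \mathrm{OPT}-2\beta]\le e^{-\epsilon\beta}/\mu(\mathcal{S}_\beta)$ is a valid substitute for the citation the paper uses.
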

\begin{proof}

Let $t = \frac{k+1}{ \epsilon} \ln{n}$.  Lemma 7 in \cite{differentialPrivacy} states that 
\begin{equation}\Pr\left[n - n^{-\gamma}\max_{\mathbf{y} \in \mathcal{P}} \left( \sum_i \frac{U_i(\mathbf{y})}{U_i(\mathbf{x})} \right) \leq OPT - 2t\right] \leq \frac{e^{-\epsilon t}}{\mu(S_t)} \label{ExponentialMechanismApproximation} \end{equation}  
where $OPT$ is the maximum value of $q(\mathbf{x})$ for feasible allocations $\mathbf{x}$ and $S_t = \{ \mathbf{x}: q(\mathbf{x}) > OPT - t\}$. By Lemma~\ref{claim:sensitivity}, we have $OPT = n(1-n^{-\gamma})$, but we need to bound $\mu(S_t)$, the probability that $\mathbf{x}$ drawn uniformly at random from $\mathcal{P}$ is in $S_t$.  We will show that $\mu(S_t) \geq n^{-k}$.  Let $\mathbf{x^*} \in \mathcal{P}$ be the allocation such that $q(\mathbf{x^*}) = OPT$. Since $\|\mathbf{x}\|_1 = 1$,  there is an item $j'$ with $x^*_{j'} \geq 1/k$. Let $\delta = 1/n$.  Define the set $S_{\delta}$ so that
\[ S_{\delta} = \left\{ \mathbf{x} \, : \, \begin{aligned} 
  &x^*_j \leq x_j \leq x^*_j + \delta  && j \neq j' \\
  &x^*_{j'} - k^2 \delta x^*_{j'} \leq x_{j'} \leq x^*_{j'} - k^2 \delta x^*_{j'} + \delta && j = j' 
\end{aligned} \right\} \]
It is not hard to see that since $x^*_{j'} \geq 1/k$, all $\mathbf{x} \in S_{\delta}$ are feasible.  Furthermore, because there is a ``width'' of $1/n$ in possible choice of $x_j$ for all $j$, $\mu(S_{\delta}) \geq n^{-k}$.  Thus, to complete the argument that $\mu(S_t) \geq n^{-k}$, we just need to show that $S_{\delta} \subseteq S_t$.  In our case, $$S_t = \left\{\mathbf{x} \,:\, \max_{y \in \mathcal{P}} \sum_i \frac{U_i(\mathbf{y})}{U_i(\mathbf{x})} < n + \frac{k+1}{\epsilon}\ln{n} \right\}$$   

Since $\frac{1}{\epsilon} > \frac{k n}{(n-k^2)\ln{n}}$, substituting shows that an allocation $\mathbf{x}$ is surely in $S_t$ if the same sum is less than $\frac{n^2}{n-k^2}$.  By construction, in the worst case for any agent $i$ and allocation $\mathbf{x} \in S_{\delta}$ (namely, if $u_{ij'} = 1$), $U_i(\mathbf{x}) \geq \frac{n-k^2}{n} U_i(\mathbf{x^*})$.  Therefore,  for all $\mathbf{x} \in S_{\delta}$ 
$$\max_{y \in \mathcal{P}} \sum_i \frac{U_i(\mathbf{y})}{U_i(\mathbf{x})} \leq \frac{n}{n-k^2} \max_{y \in \mathcal{P}} \sum_i \frac{U_i(\mathbf{y})}{U_i(\mathbf{x^*})} = \frac{n^2}{n-k^2}$$

Thus, we have that $S_{\delta} \subseteq S_t$ and therefore $\mu(S_t) \geq n^{-k}$.  Substituting into equation~\ref{ExponentialMechanismApproximation} and simplifying yields
$$\Pr\left[\max_{\mathbf{y} \in \mathcal{P}} \left( \sum_i \frac{U_i(\mathbf{y})}{U_i(\mathbf{x})} \right) > n + 2\frac{k+1}{ \epsilon} n^{\gamma}\ln{n}\right] \leq \frac{1}{n} $$

By applying Corollary~\ref{corollary:deltaCore}, we get that $\mathbf{x}$ chosen according to $\zeta^{\epsilon}_q$ is a $\frac{(k-1)n^{-\gamma} + 2(k+1)\epsilon^{-1} n^{\gamma-1}\ln{n}}{1-kn^{-\gamma}}$-approximate core solution with probability $1-\frac{1}{n}$.  Plugging in $\gamma = 1/2$ and using the fact that $k$ is $o(\sqrt{n})$ gives that $\mathbf{x}$ chosen according to $\zeta^{\epsilon}_q$ is an $O\left(\frac{k\ln{n}}{\epsilon \sqrt{n}}\right)$-approximate core solution with probability $1-\frac{1}{n}$.
\end{proof}

Finally, we show that $\zeta^{\epsilon}_q$ can be sampled in polynomial time~\cite{hitAndRun} with small additive error in truthfulness.

\begin{claim}
The Hit-and-run method can be used to sample according to $\zeta^{\epsilon}_q$ in polynomial time.
\label{theorem:log-concave}
\end{claim}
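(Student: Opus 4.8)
The plan is to reduce the claim to the standard fact that one can sample in polynomial time from a log-concave density supported on a convex body, given a membership oracle for the body and an evaluation oracle for the density~\cite{hitAndRun}. Concretely, the target density is proportional to $e^{\epsilon q(\mathbf{x})}\mu(\mathbf{x})$ on $\mathcal{P}$, and since $\mu$ is the uniform distribution on $\mathcal{P}$ we have $\log\!\left(e^{\epsilon q(\mathbf{x})}\mu(\mathbf{x})\right) = \epsilon\, q(\mathbf{x}) + \mathrm{const}$ on $\mathcal{P}$. Hence the density is log-concave on the (convex) polytope $\mathcal{P}$ precisely when $q$ is concave, so the crux is to prove concavity of $q$.

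\noindent\textbf{Concavity of $q$.} Recall $q(\mathbf{x}) = n - n^{-\gamma}\max_{\mathbf{y}\in\mathcal{P}}\sum_i U_i(\mathbf{y})/U_i(\mathbf{x})$. Fix $\mathbf{y}\in\mathcal{P}$ and consider the map $\mathbf{x}\mapsto \sum_i U_i(\mathbf{y})/U_i(\mathbf{x})$. For linear utilities $U_i(\mathbf{x}) = \mathbf{u_i}\cdot\mathbf{x}$, and on $\mathcal{P}$ every coordinate satisfies $x_j\geq n^{-\gamma}$, so $U_i(\mathbf{x})\geq n^{-\gamma}>0$; thus $1/U_i(\mathbf{x})$ is the composition of the convex nonincreasing scalar map $t\mapsto 1/t$ with the affine strictly positive map $\mathbf{x}\mapsto \mathbf{u_i}\cdot\mathbf{x}$, hence convex. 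A nonnegative linear combination of convex functions is convex, and the pointwise supremum over $\mathbf{y}$ of a family of convex functions is convex; therefore $\max_{\mathbf{y}\in\mathcal{P}}\sum_i U_i(\mathbf{y})/U_i(\mathbf{x})$ is convex in $\mathbf{x}$, and $q$ is concave. (The argument carries over to concave utilities that are homogeneous of degree one, using that $1/U_i$ is convex whenever $U_i$ is concave and positive.) Note it is essential here that the inner maximization is over a set $\mathcal{P}$ that does not depend on $\mathbf{x}$, and that the floor $\mathbf{x}\geq n^{-\gamma}$ built into $\mathcal{P}$ keeps each $U_i$ bounded away from $0$; this is exactly why $q$ is genuinely concave rather than merely quasi-concave.

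\noindent\textbf{Oracles and the translation to truthfulness.} For the evaluation oracle, observe that once $\mathbf{x}$ is fixed, $\sum_i U_i(\mathbf{y})/U_i(\mathbf{x}) = \bigl(\sum_i \mathbf{u_i}/U_i(\mathbf{x})\bigr)\cdot\mathbf{y}$ is \emph{linear} in $\mathbf{y}$, so the inner maximum over the polytope $\mathcal{P}$ is a linear program and $q(\mathbf{x})$ is computable in polynomial time (up to standard numerical precision). The polytope $\mathcal{P}$ admits trivially computable sandwiching balls, so the preconditions of the log-concave sampler of~\cite{hitAndRun} are met, and hit-and-run returns, in time polynomial in $k$ and $\log(1/\eta)$, a sample whose law is within total variation distance $\eta$ of $\zeta^{\epsilon}_q$. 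Finally, since every agent's utility lies in $[0,1]$ on $\mathcal{P}$, a total variation error of $\eta$ changes any agent's expected utility by at most $\eta$; hence the approximate truthfulness of Lemma~\ref{lemma:truthful} degrades by only an additive $\eta$, which can be driven to any inverse polynomial.

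\noindent\textbf{Main obstacle.} The one non-mechanical point is establishing true concavity of $q$ (not just quasi-concavity), which relies on both the $\mathbf{x}$-independence of the inner feasible set $\mathcal{P}$ and the lower bound $U_i(\mathbf{x})\geq n^{-\gamma}$; beyond that, the result is a direct appeal to known log-concave sampling machinery together with the elementary total-variation-to-utility bound.
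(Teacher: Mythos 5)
Your proposal is correct and follows essentially the same route as the paper: reduce to sampling a log-concave density over the convex set $\mathcal{P}$, and establish concavity of $q$ by noting that $1/U_i(\mathbf{x})$ is convex (as the composition of the nonincreasing convex map $t \mapsto 1/t$ with a positive affine function), that nonnegative sums of convex functions are convex, and that the pointwise maximum over $\mathbf{y} \in \mathcal{P}$ preserves convexity, before invoking the hit-and-run sampler of~\cite{hitAndRun}. Your added remarks on the LP evaluation oracle for $q$ and on the total-variation error translating into a small additive loss in truthfulness are sound refinements of the same argument rather than a different approach.
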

\begin{proof}
As argued in \cite{hitAndRun}, it is sufficient to show that $e^{\epsilon q(\mathbf{x})} \mu(\mathbf{x})$ is log-concave. The support of the sampling is clearly convex as it is just the feasible non negative orthant (feasibility defined by a hyperplane).  We need to show that the function is log-concave in the input $\mathbf{x}$.

$$\ln{\left(e^{\epsilon q(\mathbf{x})} \mu(\mathbf{x})\right)} = \epsilon q(\mathbf{x}) + \ln{\left(\mu(\mathbf{x})\right)}$$

Since $\mu$ is only uniform, this is just an affine transformation of $q(\mathbf{x})$, therefore we need only show that $q(\mathbf{x})$ is concave. Recall the definition of $q(\mathbf{x})$:

$$q(\mathbf{x}):= n -  n^{-\gamma} \max_{\mathbf{y} \in \mathcal{P}} \left( \sum_i \frac{U_i(\mathbf{y})}{U_i(\mathbf{x})}\right)$$

Each individual utility function $U_i$ is concave, since it is a linear function $(\mathbf{u_i} \cdot \mathbf{x})$.  Thus, $U(\mathbf{x})^{-1}$ is convex because it is the composition of the convex and non-increasing scalar function $1/x$ with the concave multivariate (but scalar valued) $U_i(x)$ \cite{boydConvexOptimization}.  The sum is still convex, as a linear combination of convex functions.  Then $q(\mathbf{x}) = n - n^{-\gamma} \max_{\mathbf{y} \in \mathcal{P}} \left( \sum_i \frac{U_i(\mathbf{y})}{U_i(\mathbf{x})}\right)$ is concave in $\mathbf{x}$.
\end{proof}

\section{Conclusion}
In this paper, we have initiated the computational study of the Lindahl equilibrium in order to address fair resource allocation in the context of participatory budgeting. Our key conceptual contribution is expressing the Lindahl equilibrium (and hence the core) purely in terms of the common allocation variables. In a sense, this is a mirror image of the role common prices play in private good markets. This allows us to efficiently compute core allocations as the solution to a convex program.  We also used our characterization to provide an adaptation of the exponential mechanism from differential privacy guaranteeing approximate truthfulness while computing approximate core allocations.  We studied the results that such core allocations produce on real data and saw that they are similar to welfare allocations under the saturating utility model.  We note that this is surprising as it is not obvious that core allocations should be similar to welfare allocations under \textit{any} utility model.

Our work is just the first step towards understanding participatory budgeting specifically and the fair allocation of public goods more generally. We do not yet understand the computational complexity for more general utility functions. Is computing the Lindahl equilibrium for public goods computationally hard or is there a polynomial time algorithm even without the non-satiating assumption? Our experimental results leave open intriguing questions about modeling of real voting data. In particular, is there a more formal explanation of why welfare appears fair in practice? Also, is there a different way to elicit more information from voters for a more precise modeling of their utility than just approval voting?


\paragraph{Acknowledgement.} We thank Anilesh Krishnaswamy for useful discussions, and the Stanford Crowdsourced Democracy Team for the use of their data.

\newpage
\bibliographystyle{plain}
\bibliography{refs}

\begin{thebibliography}{10}

\bibitem{budish3}
E.~M. Azevedo and E.~B. Budish.
\newblock Strategy-proofness in the large.
\newblock {\em Chicago Booth Research Paper}, (13-35), 2013.

\bibitem{strategyProofInLarge}
Eduardo~M. Azevedo and Eric Budish.
\newblock Strategyproofness in the large as a desideratum for market design.
\newblock In {\em Proceedings of the 13th ACM Conference on Electronic
  Commerce}, EC '12, pages 55--55, New York, NY, USA, 2012. ACM.

\bibitem{boydConvexOptimization}
Stephen Boyd and Lieven Vandenberghe.
\newblock {\em Convex Optimization}.
\newblock Cambridge University Press, New York, NY, USA, 2004.

\bibitem{budish2}
E.~Budish.
\newblock The combinatorial assignment problem: Approximate competitive
  equilibrium from equal incomes.
\newblock {\em J. Political Economy}, 119(6):1061 -- 1103, 2011.

\bibitem{pb1}
Yves Cabannes.
\newblock Participatory budgeting: a significant contribution to participatory
  democracy.
\newblock {\em Environment and Urbanization}, 16(1):27--46, 2004.

\bibitem{clarke}
Edward~H. Clarke.
\newblock Multipart pricing of public goods.
\newblock {\em Public Choice}, 11:pp. 17--33, 1971.

\bibitem{coleProportionalFairness}
R.~Cole, V.~Gkatzelis, and G.~Goel.
\newblock Mechanism design for fair division: Allocating divisible items
  without payments.
\newblock In {\em Proceedings of the Fourteenth ACM Conference on Electronic
  Commerce}, EC '13, pages 251--268, 2013.

\bibitem{CPPP}
S.~Dughmi.
\newblock A truthful randomized mechanism for combinatorial public projects via
  convex optimization.
\newblock {\em CoRR}, abs/1103.0041, 2011.

\bibitem{lindahlCore}
Duncan~K. Foley.
\newblock Lindahl's solution and the core of an economy with public goods.
\newblock {\em Econometrica}, 38(1):pp. 66--72, 1970.

\bibitem{knapsack2}
Ashish Goel, Anilesh~K Krishnaswamy, and Sukolsak Sakshuwong.
\newblock Budget aggregation via knapsack voting: Welfare-maximization and
  strategy-proofness.
\newblock {\em Collective Intelligence}, 2016.

\bibitem{goel2015knapsack}
Ashish Goel, Anilesh~K Krishnaswamy, Sukolsak Sakshuwong, and Tanja Aitamurto.
\newblock Knapsack voting.
\newblock {\em Conference on Collective Intelligence}, 2015.

\bibitem{knapsack1}
Ashish Goel, Anilesh~K Krishnaswamy, Sukolsak Sakshuwong, and Tanja Aitamurto.
\newblock Knapsack voting.
\newblock {\em Collective Intelligence}, 2015.

\bibitem{freeRiderProblem}
T.~Groves and J.~Ledyard.
\newblock Optimal allocation of public goods: A solution to the "free rider"
  problem.
\newblock {\em Econometrica}, 45(4):pp. 783--809, 1977.

\bibitem{eisenbergGaleMarkets}
K.~Jain and V.~V. Vazirani.
\newblock Eisenberg-gale markets: Algorithms and structural properties.
\newblock In {\em Proc. ACM Symp. Theory of Computing}, STOC '07, pages
  364--373, 2007.

\bibitem{marketEquilibria}
K.~Jain, V.~V. Vazirani, and Y.~Ye.
\newblock Market equilibria for homothetic, quasi-concave utilities and
  economies of scale in production.
\newblock In {\em Proc. ACM-SIAM Symp. Discrete Algorithms}, SODA '05, pages
  63--71, 2005.

\bibitem{ROBUS}
M.~{Kunjir}, B.~{Fain}, K.~{Munagala}, and S.~{Babu}.
\newblock {ROBUS: Fair Cache Allocation for Multi-tenant Data-parallel
  Workloads}.
\newblock {\em ArXiv e-prints}, April 2015.

\bibitem{qv}
Steven~P Lalley and E~Glen Weyl.
\newblock Quadratic voting.
\newblock {\em Available at SSRN 2003531}, 2015.

\bibitem{LindahlPaper}
E.~Lindahl.
\newblock Just taxation: {A} positive solution.
\newblock In R.~A. Musgrave and A.~T. Peacock, editors, {\em Classics in the
  Theory of Public Finance}. Palgrave Macmillan UK, 1958.

\bibitem{LiuPycia2011}
Qingmin Liu and Marek Pycia.
\newblock Ordinal efficiency, fairness, and incentives in large markets, 2012.

\bibitem{hitAndRun}
L.~Lov\'{a}sz and S.~Vempala.
\newblock The geometry of logconcave functions and sampling algorithms.
\newblock {\em Random Struct. Algorithms}, 30(3):307--358, May 2007.

\bibitem{differentialPrivacy}
F.~McSherry and K.~Talwar.
\newblock Mechanism design via differential privacy.
\newblock In {\em Annual IEEE Symposium on Foundations of Computer Science
  (FOCS)}, 2007.

\bibitem{coreConjectureCounter}
T.~J Muench.
\newblock The core and the lindahl equilibrium of an economy with a public
  good: an example.
\newblock {\em Journal of Economic Theory}, 4(2):241 -- 255, 1972.

\bibitem{nashBargaining}
J.~F. Nash.
\newblock The bargaining problem.
\newblock {\em Econometrica}, 18(2):pp. 155--162, 1950.

\bibitem{SPBP}
PB-Stanford.
\newblock Stanford participatory budgeting platform.
\newblock Website, 2015.

\bibitem{pb2}
PBP.
\newblock Where has it worked? - the participatory budgeting project.
\newblock
  \url{http://www.participatorybudgeting.org/about-participatory-budgeting/where-has-it-worked/},
  2016.

\bibitem{cakeCutting}
A.~D. Procaccia.
\newblock Cake cutting: Not just child's play.
\newblock {\em Commun. ACM}, 56(7):78--87, July 2013.

\bibitem{DesignWithoutPayments}
A.~D. Procaccia and M.~Tennenholtz.
\newblock Approximate mechanism design without money.
\newblock In {\em Proc. 10th ACM EC}, EC '09, pages 177--186, 2009.

\bibitem{SamuelsonPaper}
P.~A. Samuelson.
\newblock The pure theory of public expenditure.
\newblock {\em The Review of Economics and Statistics}, 36(4):387--389, 1954.

\bibitem{scarfCore}
H.~E. Scarf.
\newblock The core of an n person game.
\newblock {\em Econometrica}, 35(1):pp. 50--69, 1967.

\bibitem{varian}
H.~R. Varian.
\newblock {Two problems in the theory of fairness}.
\newblock {\em Journal of Public Economics}, 5(3-4):249--260, 1976.

\bibitem{vaziraniBargaining}
V.~V. Vazirani.
\newblock Nash bargaining via flexible budget markets.
\newblock In {\em Proc. 4th Intl. Conf. Algorithmic Aspects in Information and
  Management}, AAIM '08, pages 2--2, 2008.

\bibitem{marketHardness}
V.~V. Vazirani and M.~Yannakakis.
\newblock Market equilibrium under separable, piecewise-linear, concave
  utilities.
\newblock {\em J. ACM}, 58(3):10:1--10:25, June 2011.

\bibitem{vickrey}
W.~Vickrey.
\newblock Counterspeculation, auctions, and competitive sealed tenders.
\newblock {\em The Journal of Finance}, 16(1):pp. 8--37, 1961.

\end{thebibliography}

\section*{Appendix}
\appendix

\section{Approximate Lindahl Equilibrium}
\label{sec:approx}
We prove that an additive approximation to the Lindahl equilibrium conditions implies an additively approximate core solution. 

\begin{theorem}
\label{cor:approx}
For any $\epsilon > 0$, suppose there is an allocation $\mathbf{x}$ such that for all items $j$, $x_j >0$ implies
$$\left | \frac{B}{n} \sum_i \left( \frac{\frac{\partial}{\partial x_j}U_i(\mathbf{x})}{\sum_m x_m \frac{\partial}{\partial x_m}U_i(\mathbf{x})} \right) - 1 \right | \le \epsilon $$
and $x_j = 0$ implies
$$ \frac{B}{n} \sum_i \left( \frac{\frac{\partial}{\partial x_j}U_i(\mathbf{x})}{\sum_m x_m \frac{\partial}{\partial x_m}U_i(\mathbf{x})} \right) \le 1+\epsilon $$
then $\mathbf{x}$ is an approximate core solution in the following sense:
\begin{enumerate}
\item $\sum_j x_j \le B/(1-\epsilon)$, and
\item For any subset $S$ of agents, there is no allocation $\mathbf{y}$ of size $\left(\frac{|S|}{n} - \epsilon \right) B$ such that $U_i(\mathbf{y}) > U_i(\mathbf{x})$ for all $i \in S$. 
\end{enumerate}
\end{theorem}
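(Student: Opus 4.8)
The plan is to sidestep the (in general nonconvex) core condition by using concavity of the $U_i$ to replace the statement ``$\mathbf{y}$ makes everyone in $S$ strictly better off than under $\mathbf{x}$'' by a \emph{linear} inequality evaluated at $\mathbf{x}$, and then to sum the hypothesized (in)equalities over agents and items in two different orders of summation. Throughout I write $\partial_j U_i(\mathbf{x})$ for $\frac{\partial}{\partial x_j}U_i(\mathbf{x})$ and abbreviate the common denominator appearing in the statement as $D_i := \sum_m x_m\,\partial_m U_i(\mathbf{x})$; I will assume the mild nondegeneracy $D_i>0$ for every $i$ (which holds, e.g., whenever $\mathbf{x}$ is entrywise positive and no agent is indifferent to all projects, as is the case for the non-satiating instances produced by the convex program of Theorem~\ref{theorem:potential}).

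\textbf{Part 1 (budget bound).} For each item $j$ with $x_j>0$ the hypothesis gives $\frac{B}{n}\sum_i \frac{\partial_j U_i(\mathbf{x})}{D_i} \ge 1-\epsilon$; I multiply this through by $x_j$, note that it degenerates to $0\ge 0$ when $x_j=0$, and sum over all $j$. On the left the order of summation can be swapped, and the inner sum $\sum_j x_j\,\partial_j U_i(\mathbf{x})$ is exactly $D_i$, which cancels the denominator and leaves $\frac{B}{n}\sum_i 1 = B$. Hence $B \ge (1-\epsilon)\sum_j x_j$, i.e. $\sum_j x_j \le B/(1-\epsilon)$.

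\textbf{Part 2 (core bound).} Suppose toward a contradiction that some coalition $S$ deviates to an allocation $\mathbf{y}\ge 0$ with $\|\mathbf{y}\|_1 \le \big(\tfrac{|S|}{n}-\epsilon\big)B$ and $U_i(\mathbf{y})>U_i(\mathbf{x})$ for all $i\in S$. Concavity of $U_i$ gives $0 < U_i(\mathbf{y})-U_i(\mathbf{x}) \le \nabla U_i(\mathbf{x})\cdot(\mathbf{y}-\mathbf{x})$, i.e. $\sum_j y_j\,\partial_j U_i(\mathbf{x}) > \sum_j x_j\,\partial_j U_i(\mathbf{x}) = D_i$. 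Dividing by $D_i>0$, summing over $i\in S$, and swapping sums yields $\sum_j y_j \big(\sum_{i\in S}\frac{\partial_j U_i(\mathbf{x})}{D_i}\big) > |S|$. Since every summand is nonnegative (utilities are nondecreasing) and $y_j\ge 0$, I may enlarge the inner sum to all $n$ agents, and then the upper-bound hypothesis, which is valid for \emph{every} item $j$ regardless of whether $x_j>0$, gives $\sum_i \frac{\partial_j U_i(\mathbf{x})}{D_i}\le \frac{n}{B}(1+\epsilon)$. Therefore $\frac{n}{B}(1+\epsilon)\,\|\mathbf{y}\|_1 > |S|$, i.e. $\|\mathbf{y}\|_1 > \frac{B|S|}{n(1+\epsilon)}$. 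It then remains only to observe the elementary inequality $\big(\tfrac{|S|}{n}-\epsilon\big)B < \frac{B|S|}{n(1+\epsilon)}$, which follows from $\tfrac{|S|}{n}\le 1$ (so that $\tfrac{|S|}{n}\cdot\tfrac{\epsilon}{1+\epsilon}<\epsilon$); this contradicts the assumed size of $\mathbf{y}$.

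The two summation manipulations and the concavity linearization are routine; the only things needing care are keeping the directions of all inequalities straight through the division by $D_i$ and the enlargement of the coalition sum, and verifying the closing elementary comparison of budgets. The one genuine subtlety is the well-definedness of the ratios, i.e. ensuring $D_i>0$, which is why I flagged the nondegeneracy assumption above — agents indifferent to every funded project would have to be peeled off and argued about separately, but I expect that to be a minor technical point rather than a real obstacle.
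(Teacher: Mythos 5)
Your proof is correct and is essentially the paper's own argument: the paper's per-agent Lindahl prices $p_{ij}=\frac{B}{n}\,\partial_j U_i(\mathbf{x})/D_i$ and its zero-profit identity are exactly your normalized gradient ratios and your swap-of-summations, and its claim that a blocking allocation $\mathbf{y}$ must exceed each deviating agent's budget $\mathbf{p_i}\cdot\mathbf{y}\le B/n$ is precisely your concavity linearization (the implicit $D_i>0$ is needed there too, as the theorem's hypotheses already presuppose). Only the closing arithmetic differs cosmetically: the paper tracks an $\epsilon B$ slack via $\sum_i p_{ij}=1+\alpha_j$ with $\alpha_j\le\epsilon$, whereas you compare $\frac{B|S|}{n(1+\epsilon)}$ with $\left(\frac{|S|}{n}-\epsilon\right)B$ directly.
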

\begin{proof}
The first part is straightforward.  Define the following vector of prices:
$$p_{ij} = \frac{B}{n} \left( \frac{\frac{\partial}{\partial x_j}U_i(\mathbf{x})}{\sum_m x_m \frac{\partial}{\partial x_m}U_i(\mathbf{x})} \right)$$
For this price vector, $\sum_j p_{ij} x_j = \frac{B}{n}$. This price vector also satisfies $\frac{\partial}{\partial x_j}U_i(\mathbf{x}) p_{im} = \frac{\partial}{\partial x_m}U_i(\mathbf{x}) p_{ij}$ for all $j \neq m$. This implies the allocation $\mathbf{x}$ maximizes $U_i(\mathbf{z})$ subject to $\sum_j p_{ij} z_j \le \frac{B}{n}$. Let $\sum_i p_{ij}  = 1 + \alpha_j$. We have $\alpha_j \le \epsilon$ for all $j$.  Consider the profit function $P(\mathbf{z}) = \sum_i \mathbf{p_i} \cdot \mathbf{z} - \sum_j (1 + \alpha_j) z_j$. This function is identically 0 for all $\mathbf{z}$.

Now, suppose by contradiction that there exists $S$ and allocation $\mathbf{y}$ of size $\left(\frac{|S|}{n} - \epsilon \right) B$ such that $U_i(\mathbf{y}) > U_i(\mathbf{x})$ for all $i \in S$. This implies $\sum_j p_{ij} y_j > \frac{B}{n}$ for all $i \in S$. Summing, 
$$ \sum_{i \in S} \mathbf{p_i} \cdot \mathbf{y} - \frac{|S|}{n} B > 0 \ \ \Rightarrow \ \ \sum_{i \in S} \mathbf{p_i} \cdot \mathbf{y} - \sum_j y_j > \epsilon B$$
Note now that $\sum_j \alpha_j y_j \le \epsilon \sum_j y_j \le \epsilon B$. This implies
$$   \sum_{i \in S} \mathbf{p_i} \cdot \mathbf{y} - \sum_j (1+\alpha_j) y_j  > 0$$
This is a contradiction, since $P(\mathbf{y})=0$. 
\end{proof}

\section{Core under Independent Preferences}
\label{app:random}

Recall the intuition from Section~\ref{sec:sat} that one possible explanation for the similarity between core and welfare outcomes is that users might have approximately independent random preferences over the projects.  Consider a random model in which there are infinitely many agents and there is a value $p_j \in [0,1]$ associated with every project, all of which have unit cost.  Each agent votes for project $j$ with probability $p_j$, and these draws are independent across the projects.  Every project also has a utility $u_j$ associated with it so that the utility of an agent is the sum over the projects for which the agent votes of their $u_j$. The allocation maximizing {\sc Welfare} is then just the set $S^* = \mbox{argmax}_{S : |S| \le B} \sum_{j \in S} p_j u_j $.

For the theorem below, we consider a $(1+\epsilon)$- approximate {\em integral} core, where there is no subset of agents of size $a \cdot B$ who can deviate and choose $a \cdot B$ items integrally so that all agents improve their utility by at least a factor of $(1+\epsilon)$. For subset $S$, let $U(S)$ be the random variable denoting utility than an agent derives from $S$. Note that $\mathbb{E}[U(S)] = \sum_{j \in S} u_j p_j$, and {\sc Welfare} generates expected utility $\mathbb{E}[U(S^*)]$.

\begin{theorem} 
\label{thm:random}
Under the random users model, if $\mathbb{E}[U(S^*)] > \frac{1}{\epsilon} \sqrt{B \ln{(B)}}$ then $S^*$ is a $(1+\epsilon)$-approximate integral core solution.
\end{theorem}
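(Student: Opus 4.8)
The plan is to reduce the statement to a single tail bound for each candidate deviation set, and then apply a Chernoff/Hoeffding estimate to the voting randomness of a single agent.

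First I would unpack ``$(1+\epsilon)$-approximate integral core'' for a continuum of agents. A blocking coalition is a set of agents of measure $a\in(0,1]$ together with an integral deviation $T$ (a set of unit-cost items) with $|T|\le aB$, such that every agent $i$ in the coalition has $U_i(T)>(1+\epsilon)U_i(S^*)$. For a \emph{fixed} $T$, the largest coalition content to switch to $T$ consists exactly of the agents who gain the factor $(1+\epsilon)$, and its measure equals $\Pr[\,U(T)>(1+\epsilon)U(S^*)\,]$, where the probability is over a uniformly random agent, i.e.\ over the independent indicators $X_j\sim\mathrm{Bernoulli}(p_j)$ recording whether that agent votes for $j$; such a coalition can afford $T$ iff this measure is at least $|T|/B$. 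Hence $S^*$ is a $(1+\epsilon)$-approximate integral core solution provided that for \emph{every} item set $T$ with $1\le |T|\le B$,
\[
\Pr[\,U(T)>(1+\epsilon)U(S^*)\,] < \frac{|T|}{B}.
\]
This is a per-$T$ statement: no union bound over the exponentially many $T$ is needed, since a failure of the core asserts the existence of one blocking $T$, and the probabilities above are deterministic population fractions. I will in fact show the stronger bound $<1/B$ for every such $T$.

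Fix $T$ with $|T|=m\le B$. As in the {\sc Saturating} model I take $u_j\le 1$, and I may assume $|S^*|=B$ (adding items never hurts since $u_jp_j\ge 0$; with fewer than $B$ items $S^*$ funds everything). Introduce $D:=(1+\epsilon)U(S^*)-U(T)=\sum_j c_j X_j$, where $c_j=(1+\epsilon)u_j$ for $j\in S^*\setminus T$, $c_j=-u_j$ for $j\in T\setminus S^*$, $c_j=\epsilon u_j$ for $j\in S^*\cap T$, and $c_j=0$ otherwise; thus $|c_j|\le(1+\epsilon)u_j\le 1+\epsilon$ and $c_j$ vanishes outside $S^*\cup T$. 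The event that $T$ blocks for this agent is exactly $\{D<0\}$ (this subsumes agents with $U(S^*)=0$). By the optimality of $S^*$ over item sets of size at most $B$ and $|T|\le B$, $\mathbb{E}[U(T)]=\sum_{j\in T}u_jp_j\le\sum_{j\in S^*}u_jp_j=\mathbb{E}[U(S^*)]=:\mu^*$, so $\mathbb{E}[D]=(1+\epsilon)\mu^*-\mathbb{E}[U(T)]\ge\epsilon\mu^*$. Since $D$ is a sum of independent random variables each supported in an interval of length $|c_j|\le 1+\epsilon$, with at most $|S^*\cup T|\le B+m\le 2B$ nonzero terms and $\sum_j c_j^2\le(1+\epsilon)^2\sum_{j\in S^*\cup T}u_j^2\le(1+\epsilon)^2(B+m)=O(B)$, Hoeffding's inequality gives
\[
\Pr[\,U(T)>(1+\epsilon)U(S^*)\,]\;\le\;\Pr[\,D-\mathbb{E} D\le-\mathbb{E} D\,]\;\le\;\exp\!\Big(-\tfrac{2(\mathbb{E} D)^2}{\sum_j c_j^2}\Big)\;\le\;\exp\!\Big(-\Omega\big(\tfrac{\epsilon^2(\mu^*)^2}{B}\big)\Big).
\]
Under the hypothesis $\mu^*>\tfrac1\epsilon\sqrt{B\ln B}$ we have $\epsilon^2(\mu^*)^2>B\ln B$, so (absorbing the implied constant, or sharpening the bound on $\sum_j c_j^2$) the right-hand side is at most $1/B\le m/B$. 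By the reduction above, no coalition blocks $S^*$ at factor $(1+\epsilon)$.

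\textbf{Main obstacle.} The concentration step is routine once $D$ and $\mathbb{E}[D]\ge\epsilon\mu^*$ are in place; the crux is the reduction: correctly formalizing blocking coalitions over a continuum, observing that only a per-$T$ bound (not a union bound) is required, and checking that the degenerate agents with $U(S^*)=0$ do no harm (automatic, since $\{D<0\}$ already captures them). The only quantitative care needed is keeping $\sum_j c_j^2=O(B)$, which uses $u_j\le 1$ and $|S^*\cup T|\le 2B$ --- this is exactly what produces the $\sqrt{B\ln B}$ (rather than $\ln B$) scaling of the hypothesis; using a variance-based inequality (Bernstein) in place of Hoeffding would in fact let one weaken the hypothesis to roughly $\mu^*\gtrsim\epsilon^{-2}\ln B$ in the regime where the theorem is non-vacuous.
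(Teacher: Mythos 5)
Your proposal is correct and follows essentially the same route as the paper's proof: bound, for any fixed deviation set, the population fraction of agents gaining a factor $(1+\epsilon)$ by applying Hoeffding to $(1+\epsilon)U(S^*)-U(T)$ (whose mean is at least $\epsilon\,\mathbb{E}[U(S^*)]$ by optimality of $S^*$), and conclude this fraction is below the $|T|/B$ threshold needed for a blocking coalition. Your treatment is, if anything, slightly more careful than the paper's about the coefficient ranges entering the Hoeffding denominator (the paper tacitly uses $B$ where the squared ranges can sum to $\Theta((1+\epsilon)^2 B)$), a constant-factor looseness you flag explicitly.
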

\begin{proof}
Suppose by contradiction that $S^*$ is not a $1+\epsilon$ approximate core solution.  Then there must exist some $\alpha$ fraction of the agents who want to deviate to another allocation: call this set of items $S$, where $|S| \leq \alpha B$.  Then it must be that the probability of an agent preferring $S$ is at least $\alpha$.  Also, they must prefer it even subject to a $1+\epsilon$ multiplicative penalty, that is, $U(S) > (1+\epsilon)U(S^*)$.  

Let $\mathcal{S} = (1+\epsilon)U(S^*) - U(S)$.  $\mathcal{S}$ is the sum of at least $B$ random variables, and it's expectation is at least $\epsilon \mathbb{E}[U(S^*)]$ since $\mathbb{E}[U(S)] \le \mathbb{E}[U(S^*)]$.  We apply Hoeffding's inequality to get:
$$\Pr[\mathcal{S} < 0] \leq e^{-2(\epsilon \mathbb{E}[U(S^*)])^2/B}$$
However, recall that $\mathbb{E}[U(S^*)] > \frac{1}{\epsilon} \sqrt{B \ln{(B)}}$ by assumption, so $Pr[\mathcal{S} < 0] \leq 1/B^2$.  So, the probability of an agent preferring $S$ is no more than $1/B^2$.  But note that $\alpha$ must be at least $1/B$ in order for $S$ to be nonempty.  This is a contradiction, and $S^*$ is a $(1+\epsilon)$- approximate integral core solution. 
\end{proof}

This leads us to empirically test the independent preference hypothesis on our data sets. For each pair of items $j$ and $j'$, we perform a $\chi^2$-test of independence between the preference vectors for these items. Since preferences are binary, this test has two degrees of freedom. This produces a $p$-value; we mark the items as correlated if the $p$-value is less than $0.1$, and mark them as independent otherwise. We set the distance between two projects to $0$ if they are correlated and $1$ if they are independent, and run average linkage clustering on the resulting distance matrix. The results for the Boston data is presented in Figure~\ref{fig:clustering}; other data sets produce similar results. We observe that there are large groups of projects all of which are correlated with each other, as one might expect. This shows that the independent preference model is not the complete explanation for why {\sc Core} coincides with {\sc Welfare} on our data sets. 

\begin{figure}[h!]
    \centering
    \includegraphics[width=3.5in]{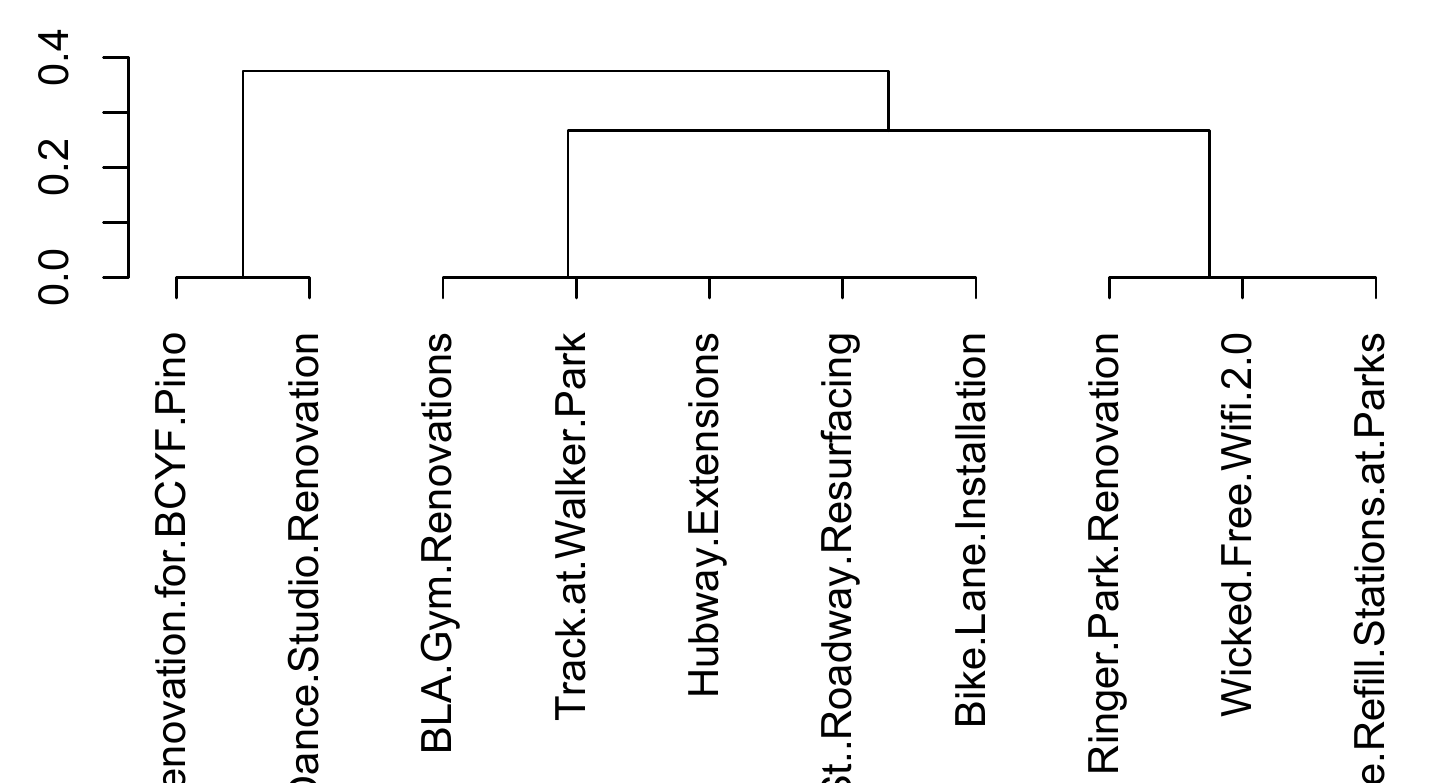}
    \captionof{figure}{Average linkage clustering dendrogram for items in the Boston data. A height of 0 denotes correlation and 1 denotes independence.  }
    \label{fig:clustering}
\end{figure}

\end{document}